\tikzset{
vecArrow/.style={
  thick,
  decoration={markings,mark=at position
   1 with {\arrow[scale=2,thin]{open triangle 60}}},
  double distance=1.4pt, shorten >= 10.5pt,
  preaction = {decorate},
  postaction = {draw,line width=1.4pt, white,shorten >= 4.5pt}
  },
innerWhite/.style={
  semithick, 
  white,
  line width=1.4pt, 
  shorten >= 4.5pt
  }
}
\tikzstyle{tensor}=[rectangle,draw=blue!50,fill=blue!20,thick]
\newtheorem{theorem}{Theorem}
\newtheorem{lemma}[theorem]{Lemma}
\newtheorem{corollary}[theorem]{Corollary}
\newtheorem{proposition}[theorem]{Proposition}
\newtheorem{problem}[theorem]{Problem}
\newtheorem{conjecture}[theorem]{Conjecture}
\newtheorem{definition}[theorem]{Definition}
\theoremstyle{definition}
\newtheorem{example}[theorem]{Example}
\newtheorem{remark}[theorem]{Remark}
\renewcommand{\epsilon}{\varepsilon}
\newcommand{\N}{\mathbb{N}}
\newcommand{\C}{\mathbb{C}}
\newcommand{\Q}{\mathbb{Q}}
\newcommand{\R}{\mathbb{R}}
\newcommand{\F}{\mc{F}}
\newcommand{\Hr}{{^*\R}}
\newcommand{\Hc}{{^*\C}}
\newcommand{\Ml}{\mathcal{M}_d(\ell^2_\C)}
\newcommand{\beq}{\begin{eqnarray*}}
\newcommand{\eeq}{\end{eqnarray*}}
\newcommand{\be}{\begin{eqnarray}}
\newcommand{\ee}{\end{eqnarray}}
\newcommand{\ben}{\begin{enumerate}}
\newcommand{\een}{\end{enumerate}}
\newcommand{\ba}{\begin{array}}
\newcommand{\ea}{\end{array}}
\newcommand{\ra}{\rangle}
\newcommand{\la}{\langle}
\newcommand{\mc}{\mathcal}
\newcommand{\tr}{\mathrm{tr}}
\newcommand{\eps}{\varepsilon}
\newcommand{\Pp}{\mc{P}}
\newcommand{\Mm}{\mc{M}}
\let\originalleft\left
\let\originalright\right
\renewcommand{\left}{\mathopen{}\mathclose\bgroup\originalleft}
\renewcommand{\right}{\aftergroup\egroup\originalright}
\newcommand\xqed[1]{%
  \leavevmode\unskip\penalty9999 \hbox{}\nobreak\hfill
  \quad\hbox{#1}}
\newcommand\demo{\xqed{$\triangle$}}
\newcommand{\deco}{\raisebox{.6ex}{\pgfornament[width=.4cm,color = gray]{3}}}
\begin{document}

\title{Halos and undecidability of tensor stable positive maps}

\author{Mirte van der Eyden}
\address{Institute for Theoretical Physics, Technikerstr.\ 21a,  A-6020 Innsbruck, Austria}
\email{mirte.van-der-eyden}

\author{Tim Netzer}
\address{Department of Mathematics, Technikerstr.\ 13,  A-6020 Innsbruck, Austria}
\email{tim.netzer}

\author{Gemma De las Cuevas}
\address{Institute for Theoretical Physics, Technikerstr.\ 21a,  A-6020 Innsbruck, Austria}
\email{gemma.delascuevas@uibk.ac.at}

\date{\today}

\begin{abstract}
A map $\mathcal{P}$ is tensor stable positive (tsp) if $\mathcal{P}^{\otimes n}$ is positive for all $n$, and essential tsp if it is not completely positive or completely co-positive. Are there essential tsp maps? Here we prove that there exist essential tsp maps on the hypercomplex numbers. It follows that there exist bound entangled states with a negative partial transpose (NPT) on the hypercomplex, that is, there exists NPT bound entanglement in the halo of quantum states. We also prove that tensor stable positivity on the matrix multiplication tensor is undecidable, and conjecture that tensor stable positivity is undecidable. Proving this conjecture would imply existence of essential tsp maps, and hence of NPT bound entangled states.  
\end{abstract}

\maketitle

 We would like to point you to \href{https://youtu.be/G87-Ib2JRfw}{this video}, where this work is presented in an accesible way. 

\section{Introduction} 
\label{sec:intro}
Extremal rays of convex cones play a similar role to basis vectors in vector spaces, 
as they give rise to a description of the cone in terms of positive (instead of linear) combinations thereof. 
The simplest example of a convex cone is that of nonnegative numbers: 
it has  one extremal ray, which gives rise to the finite description   $x\geq 0$. 
The situation for vectors is not much different: 
for vectors from $\R^n$ there is essentially only one notion of nonnegativity, namely that of nonnegative vectors (where every entry is nonnegative), and  they form a convex set with finitely many extreme rays---as many as the size of the vector. 
For matrices, instead, there are \emph{two} main notions of positivity: 
nonnegative matrices (i.e.\ matrices with nonnegative entries) and positive semidefinite matrices (i.e.\ complex Hermitian matrices with nonnegative eigenvalues). 
The first is essentially equivalent to that of nonnegative vectors, in the sense that they form a polyhedron whose extremal rays are the matrices $E_{ij}$ with one element equal to 1 and the rest to 0. 
The second one is fundamentally different:  positive semidefinite matrices form a convex set with \emph{infinitely} many extreme rays. 
They are not only widely studied mathematically, but also at the heart of quantum theory, as they are used to describe quantum states.  

Given an object such as a matrix with a positivity property, it is natural to study maps that preserve that property.  
The natural morphism for positive semidefinite matrices are 
 positivity preserving linear maps, simply called \emph{positive maps}. 
In contrast to completely positive maps, which admit an easy characterisation by Stinespring's Dilation Theorem, 
positive maps are very hard to describe, as they are related to entanglement detection \cite{Gu03}. 
From a mathematical perspective, 
positive maps preserve the cone of positive semidefinite matrices, 
but since this cone does not admit a finite description, neither do the maps.

\begin{figure}[t]\centering
\includegraphics[width = \textwidth]{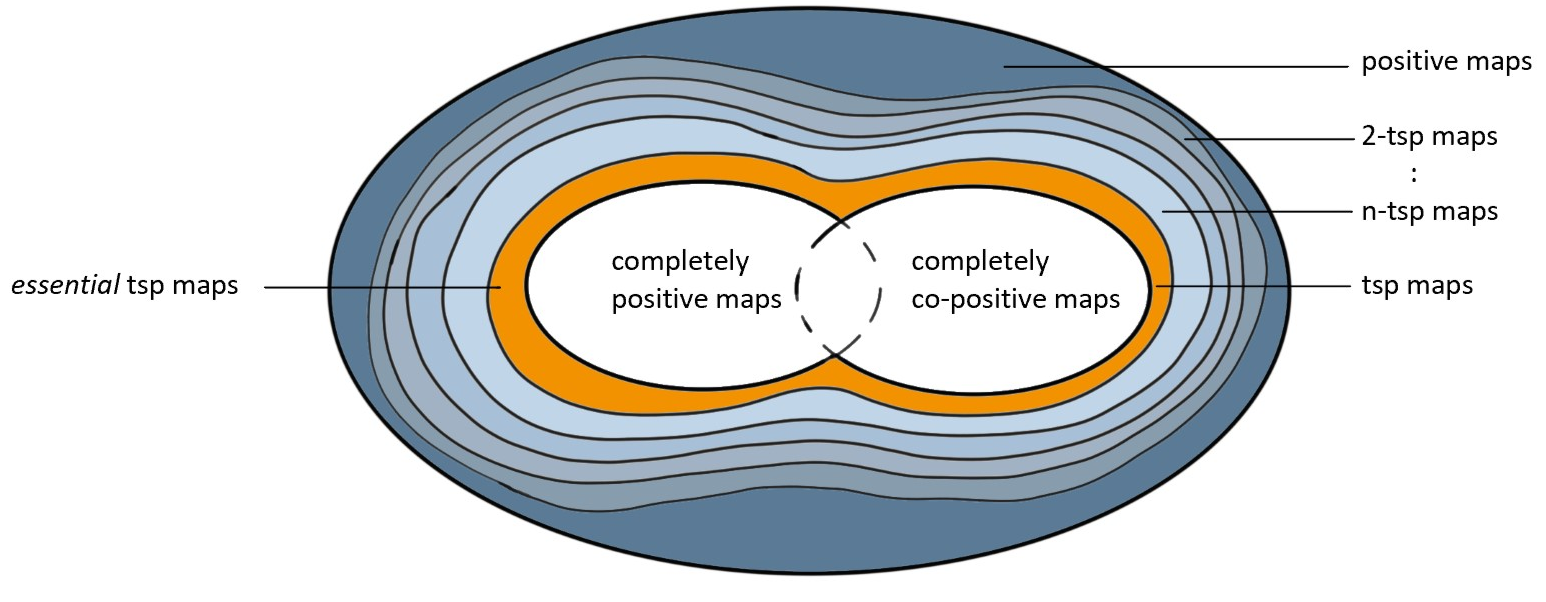}
\caption{\small The set of positive maps with its subsets of $n$-tensor stable positive (tsp) maps. Do there exist \emph{essential} tsp maps, i.e.\ tsp maps that are neither completely positive nor completely co-positive?}
\label{fig:maps}
\end{figure}

One natural composition operation for vector spaces is the tensor product $\otimes$. How does the tensor product interact with the elements of the convex cones mentioned above?  
This is a very rich problem, as the global positivity interacts with the local positivity in highly nontrivial ways \cite{De15,De19d,De20}.  
Here we consider the cone of positive maps, and study the interaction of its elements with the tensor product $\otimes$.  
Specifically, we study which maps stay positive when taking the tensor product with itself an arbitrary number of times. 
Namely, a map  $\mathcal{P}$ is called \emph{tensor stable positive} (tsp) if all its tensor powers are positive, i.e.\ $\mathcal{P}^{\otimes n}$ is positive for all $n$ \cite{Mu16} (see also \cite{Fi16}). 
It is easy to see that if $\mathcal{P}$ is completely positive, or completely positive followed by a transposition (called completely co-positive), 
then  it is  tsp---these are the \textit{trivial} tsp maps \cite{Mu16}. 
But do there exist tsp maps beyond these trivial examples (\cref{fig:maps})? 
In this paper, we call nontrivial tsp maps \emph{essential} tsp maps. 
So the central question is: 
\begin{center}
\emph{Q:\quad Are there essential tsp maps?}\label{Q}
\end{center} 

This question is not only interesting mathematically, 
but is in fact intimately related to a widely studied problem in quantum information theory. 
Namely, if there exist essential tsp maps then there exist non-distillable quantum states with a negative partial transpose (NPT), also called NPT bound entangled states \cite{Mu16}. 
The existence  of NPT bound entanglement has recently been highlighted as one of five important open problems in quantum information theory \cite{Ho20} (see also   \cite{Ho98,Di00,Du00,Ch06,Pa10}). 

In this paper, we approach question \hyperref[Q]{$Q$} from two angles. 
First, we show the existence of essential tsp maps in the field of the \emph{hypercomplex} numbers (\cref{thm:gen}). 
A hypercomplex number is of the form $x+i y$, where $x$ and $y$ are \emph{hyperreal} numbers and $i$ is the imaginary unit, $i^2=-1$. 
The hyperreals are an extension of the reals in which there exist infinitesimal and infinite elements, which are respectively smaller and bigger than any positive real number (\cref{fig:hyperreals}). 
Our result can be intuitively understood as follows: 
the hypercomplex form halos around complex numbers, 
which `glow' outside the set of trivial tsp maps, so there are essential tsp maps living in these halos (\cref{fig:tsp_halo} and \cref{fig:halo}).\footnote{%
At the risk of sounding suspiciously close to quantum mystics, especially regarding the search for an essence in a halo.} 
We call the `quantum' states defined on the hypercomplex  \textit{hyperquantum states}, 
and show that there are NPT bound entangled hyperquantum states (\cref{cor:halos}). In addition, we prove that essential tsp maps exist on  the sequence space $\ell^2$ (\cref{thm:l2}), yet with a very special notion of positivity. 

\begin{figure}[t] \centering
\includegraphics[width=1\columnwidth]{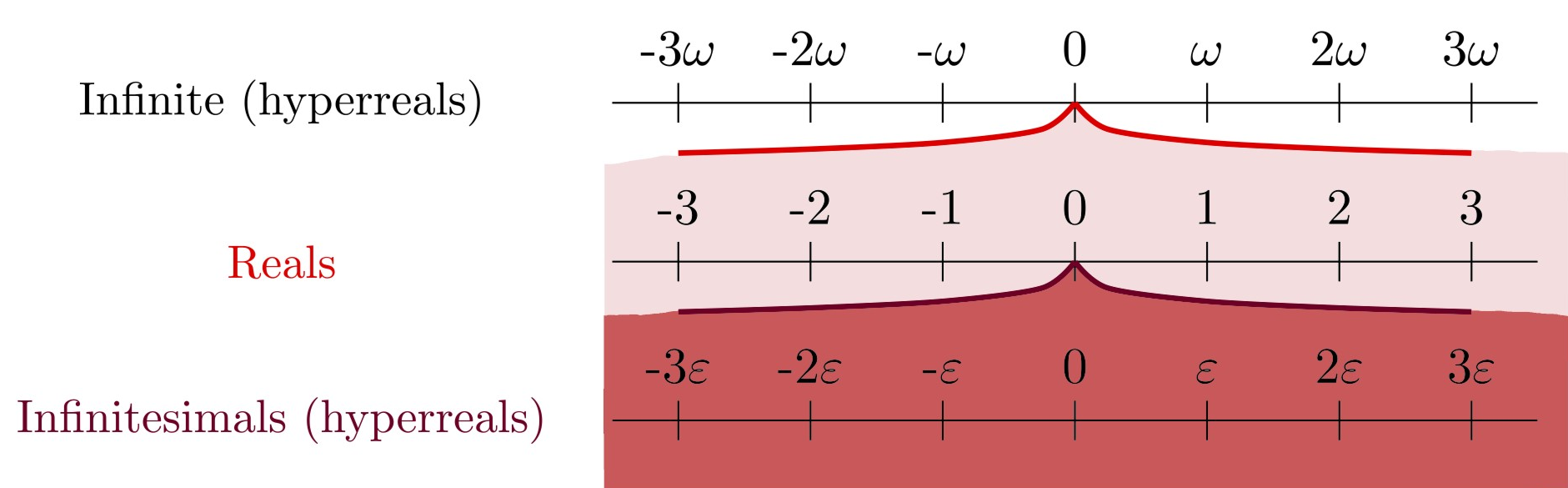}
\caption{\small 
Infinite elements $\omega$ and infinitesimal elements $\epsilon$ in the hyperreals are respectively bigger and smaller than all real numbers.}
\label{fig:hyperreals}
\end{figure}

The second angle concerns computational complexity, in particular \emph{undecidability}. 
While undecidability is well-established in  computer science and mathematics, 
its importance in physics and especially quantum information theory is being explored only recently (see e.g.\ \cite{Wo11, Ei11,Kl14, De15,Cu15b,Sc21} for a sample).  
Here we show that deciding whether a map is tsp on a specific state, namely the matrix multiplication (MaMu) tensor, is undecidable (\cref{thm:main}).  The MaMu tensor is defined as 
\be \label{eq:mamu}
|\chi_n\ra = \sum_{i_1,\ldots,i_n=1}^d 
|i_1,i_2\ra \otimes |i_2,i_3\ra \otimes \cdots \otimes |i_{n},i_1\ra , 
\ee
where $\ket{i}$ denotes the $i$-th vector from the canonical orthonormal basis, 
and $\ket{i_l,i_{l+1}}$ is shorthand for $\ket{i_{l}} \otimes \ket{i_{l+1}}$. 
Our decision problem asks whether all tensor powers of a linear map $\Pp$ send the MaMu tensor to a positive semidefinite matrix, namely: 
\smallskip
\begin{quote} 
Given $d\in \mathbb{N}$ and a linear map  $\mc{P}: \mc{M}_{d^2}  \to\mc{M}_{d^2}  $, \\
is $\mc{P}^{\otimes n} (\ket{\chi_n}\bra{\chi_n})$ positive semidefinite for all $n$? 
\end{quote}
\smallskip
We prove that this problem is undecidable.

\begin{figure}[t]\centering
\includegraphics[width = 0.8\textwidth]{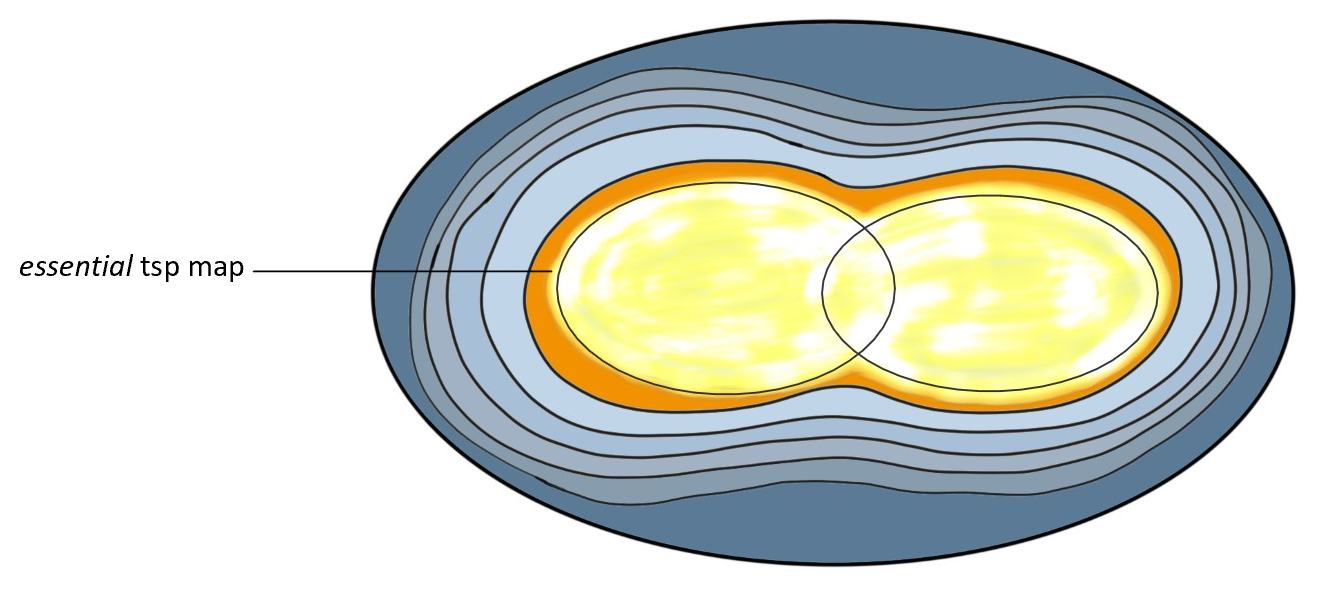}
\caption{\small There are essential tsp maps within the halo of the trivial tsp maps (\cref{thm:gen}), because the latter `glow' in the hypercomplex. (Compare with \cref{fig:maps}). }
\label{fig:tsp_halo}
\end{figure}

This paper is structured as follows. 
In \cref{sec:prelim} we present the basic notions on tensor stable positivity and the hypercomplex field. 
In  \cref{sec:hyper} we prove the existence of essential tsp maps on the hypercomplex field, 
and the existence of NPT bound entangled hyperquantum states.  
In   \cref{sec:undec} we prove the undecidability of tsp maps on the MaMu tensor.
In \cref{sec:concl} we conclude, provide an outlook and discuss the value of our results.  
In Appendix \ref{app:hyper} we give basic properties of the hyperreals,
and in Appendix \ref{app:l2} we reformulate our results on $\ell^2$.

\begin{figure}[t]\centering
\includegraphics[width=.6\columnwidth]{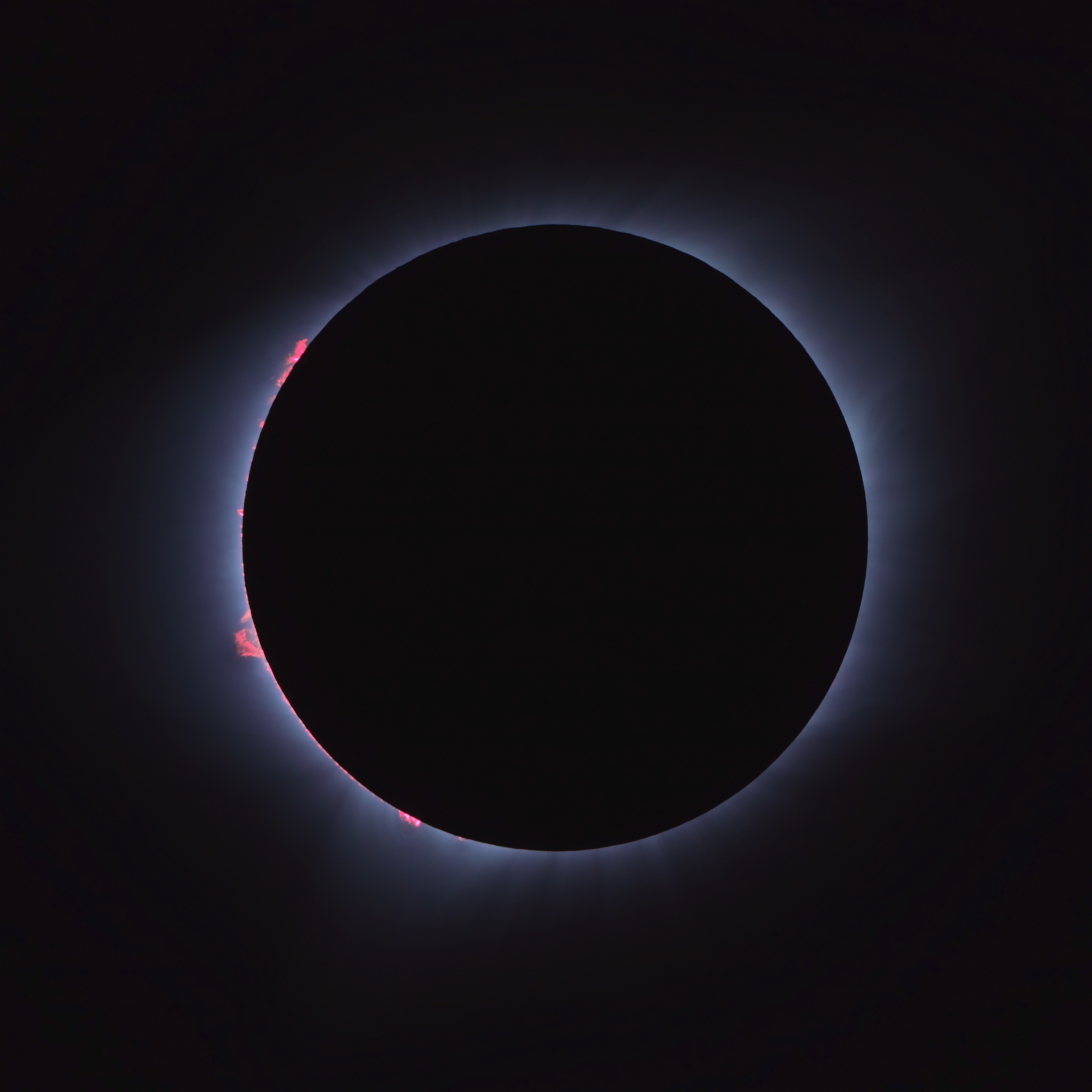}
\caption{{\small How one can imagine a halo of the complex numbers (described by the hypercomplex numbers), containing essential tsp maps. (Photo by Karen Kayser via Unsplash).}}
\label{fig:halo}
\end{figure}

\section{Setting the stage}\label{sec:prelim}

To set the stage  we fix the notation (\cref{ssec:not}), 
give basic properties of tensor stable positivity (\cref{ssec:tsp}) and  of the hypercomplex field (\cref{ssec:hyper_intro}).

\subsection{Notation}
\label{ssec:not}

We denote the computational basis of the Hilbert space $\C^d$ by  $\ket{i}$.\footnote{In mathematics, this is the canonical basis, namely $e_i$ is a column vector containing a 1 in position $i$ and 0 elsewhere.} 
The transposition map with respect to this basis is denoted $\theta(A) := A^T$; 
if $A$ is a $d\times d$ matrix, sometimes we emphasize the dimension of the transposition map as  $\theta_d$. 

The $d \times d$ identity matrix is denoted $\mathbbm{1}_d$, and the set of all $d$-dimensional square matrices with complex entries by $\mc{M}_d$. 
Whenever we consider matrices over a different field or vector space $V$ than the complex numbers $\C$, this will be denoted $\mc{M}_d(V)$.

We  write $\ket{i,j}$, or $\ket{ij}$ when there is no ambiguity, as a shorthand for $\ket{i} \otimes \ket{j}$.
 Given a matrix $A \in \mc{M}_{d_1} \otimes \mc{M}_{d_2} $ with matrix elements given by $\bra{ij}A\ket{kl}$, 
 the partial transpose of the second system, denoted $A^{T_B}$, is defined as  
 $$
 \bra{ij}A^{T_B}\ket{kl} =\bra{il}A\ket{kj}, 
 $$  
 or equivalently 
 $$
 \left(\sum_i X_i \otimes Y_i\right)^{T_B} =  \sum_i X_i \otimes Y_i^T.
 $$
 
Finally,   the \emph{flip operator}   $\mathbb{F}_d: \C^d \otimes \C^d \rightarrow \C^d \otimes \C^d$ acts as $\mathbb{F}_d \ket{ij} = \ket{ji}$.

\subsection{Tensor stable positive maps}
\label{ssec:tsp}
 
Here we define tensor stable positive (tsp) maps. 
First recall that a Hermitian matrix $A\in \mc{M}_d$ is positive semidefinite (psd), denoted $A\geqslant0$, if $\bra{v} A\ket{v} \geq 0$ for all vectors $\ket{v} \in \C^d$, and  $A$ is \emph{separable} if it can be expressed as $A = \sum_{i} \sigma_i\otimes \tau_i$ where all $\sigma_i$ and $\tau_i$ are psd.

For  a linear map 
\be \label{eq:Pcomplex}
\Pp:\mc{M}_{d_1} \rightarrow\mc{M}_{d_2},
\ee
we consider the following ways of preserving the positivity:
\begin{definition}[Notions of positivity]\label{def:pos}\quad
\begin{enumerate}[label=(\roman*),ref=(\roman*)]
\item \label{def:pos:i}
$\Pp$ is \emph{positive}, denoted $\Pp \succcurlyeq 0$,  if it maps psd matrices to psd matrices. 
\item  \label{def:pos:ii}
$\Pp$ is \emph{completely positive} if $\textrm{id}_d \otimes \Pp  \succcurlyeq 0$ for all $d$, where $\textrm{id}_d$ is the identity map on $d\times d$ matrices. 
\item  \label{def:pos:iii}
$\Pp$ is \emph{completely co-positive} if  $\Pp = \theta \circ \mc{S}$ where $\theta$ is the transposition and $\mc{S}$ is a completely positive map. 
\end{enumerate}
The set of positive, completely positive and completely co-positive maps is denoted  $\textrm{POS}$, $\textrm{CP}$ and $\textrm{coCP}$, respectively. 
\end{definition}

Bear in mind that the dimensions $d_1,d_2$ are fixed, despite the fact that our notation for the sets does not make it explicit.

For complete positivity \ref{def:pos:ii}, Choi's Theorem \cite{Ch75} says that the infinite set of conditions defining complete positivity (namely for  all $d\in \N$) is equivalent to a \emph{finite} set conditions, namely
\be \label{eq:Choithm}
\textrm{id}_d \otimes \Pp  \succcurlyeq 0 \quad \textrm{for all } d \leq d_1.  
\ee

Every linear map $\mc{P}$ (Equation \eqref{eq:Pcomplex}) 
can be decomposed as 
\begin{equation}\label{eq:decomp}
\mc{P}(X) = \sum_{i=1}^r A_i \textrm{tr}(B_i^{T}X),
\end{equation} 
where  $B_i \in \mc{M}_{d_1} $ are linearly independent, and so are $A_i \in  \mc{M}_{d_2} $, 
so that $r$ is the rank of the map (i.e.\ the dimension of the image). 
There is a one-to-one correspondence between a linear map $\Pp$ and its Choi matrix 
$$
C_{\Pp} := (\Pp \otimes \textrm{id}_{d_1}) \ket{\Omega}\bra{\Omega}  \quad \textrm{where } 
|\Omega\ra := \frac{1}{\sqrt{d_1}}\sum_{i=1}^{d_1}\ket{ii} , 
$$
where the latter is a maximally entangled state. 
In terms of the decomposition of \eqref{eq:decomp}, 
\be\label{eq:Cpdecomp}
C_{\Pp} = \frac{1}{d_1}\sum_{i=1}^r A_i \otimes B_i \in \mc{M}_{d_2} \otimes \mc{M}_{d_1}. 
\ee
$C_{\Pp}$ is \emph{block positive} if  
$$
(\bra{a} \otimes \bra{b}) C_{\Pp} (\ket{a} \otimes \ket{b})\geq 0 
$$ 
for all vectors $\ket{a}$ and $\ket{b}$. 
The following relations under the Choi-Jamio\l kowski isomorphism are well-known: 
\begin{enumerate}[label=(\alph*)]
\item $\Pp$ is positive iff $C_{\Pp}$ is block positive. 
\item  $\Pp$ is completely positive iff $C_{\Pp}\geqslant 0$.
\item $\Pp$ is completely co-positive iff $C_{\Pp}^{T_B}\geqslant 0$.
\item $\Pp$ is entanglement breaking iff $C_{\Pp}$ is separable. 
\end{enumerate} 
Recall that $C_{\Pp}$ is separable if  $C_{\Pp} = \sum_i^r A_i \otimes B_i$ with $A_i,B_i\geqslant 0 ~\forall i$.   
Note also that if $\Pp$ is positive, then $C_{\Pp}$ is Hermitian and hence $A_i$ and $B_i$ in Equation \eqref{eq:decomp} can be chosen Hermitian too. 

We are now ready to consider tensor products of positive maps  \cite{Mu16}. 
The $n$-fold tensor power of $\Pp$ is given by
\begin{align*}
\mc{P}^{\otimes n}: \mc{M}_{d_1}^{\otimes n} &\to \mc{M}_{d_2}^{\otimes n}  \\
X  \otimes Y \otimes \cdots \otimes Z 
&\mapsto  
\mc{P}(X)  \otimes \mc{P}(Y) \otimes \cdots \otimes \mc{P}(Z) 
\end{align*}
and  this extends to the entire vector space by the linearity of $\mc{P}$. 
Since $\mc{M}_d^{\otimes n } \cong \mc{M}_{d^n}$, 
the map $\Pp^{\otimes n}$ inherits the notion of positivity from $\Pp$, 
namely $\Pp^{\otimes n}$ is positive if it maps psd matrices in $\mc{M}_{d^n}$ to psd matrices. 

\begin{definition}[Tensor stable positivity \cite{Mu16}]\label{def:tsp}
Let  $\Pp:\mc{M}_{d_1} \rightarrow\mc{M}_{d_2}$ be a linear map. 
\begin{enumerate}[label=(\roman*),ref=(\roman*),leftmargin=*]
\item \label{def:tsp:i}
$\Pp$ is \emph{$n$-tensor stable positive} ($n$-tsp) if its $n$-fold tensor product is positive, 
i.e.\ $\Pp^{\otimes n} \succcurlyeq 0$. 
The set of all such maps is denoted $\textrm{TSP}_n$. 
\item  \label{def:tsp:ii}
$\Pp$ is  \emph{tensor stable positive} (tsp) if it is $n$-tsp for all $n$. 
The set of all such maps is denoted $\textrm{TSP}$. 
\end{enumerate}
\end{definition} 

Note that an $n$-tsp map is also $(n-1)$-tsp (see the proof of \cref{lem:geom}). 
In addition, for every $n$ there exists an $n$-tsp map that is not completely positive or completely co-positive \cite{Mu16}. 
Together this shows that there is a nested structure (\cref{fig:maps}):  
$$
\textrm{POS} = \textrm{TSP}_1 \supseteq \textrm{TSP}_2   \supseteq \ldots \supseteq \bigcap_n \textrm{TSP}_n = \textrm{TSP} 
\stackrel{\hyperref[Q]{Q} (?)}{=} \textrm{CP} \cup \textrm{coCP}.
$$ 
It is easy to see that every completely positive and completely co-positive map is tensor stable positive---these are the \emph{trivial} tsp maps. The key question is whether there exist nontrivial, i.e.\ \textit{essential} tsp maps \cite{Mu16}.  %

A quantum state $\rho \in \mc{M}_{d_1}\otimes  \mc{M}_{d_2}$ is \emph{distillable} if there exists a sequence of maps that can be performed with local operations and classical communication $\Lambda_n$ such that 
$$
\Lambda_n(\rho^{\otimes n}) \to |\Omega\ra\la \Omega|,  
$$ 
and $\rho$ is \emph{bound entangled} if it is not distillable (and it is entangled).  
If $\rho $ is entangled and has a positive partial transpose (PPT), then it is bound entangled   \cite{Ho98}.

\subsection{The hypercomplex field} 
\label{ssec:hyper_intro}

Here we present the basic notions of the hyperreal and the hypercomplex field. For an introduction to the topic we refer to \cite{Go98}. 

As a general rule, all results from linear algebra (like eigenvalues, invertibility, determinants, traces, etc) hold for any field, so in particular to the reals, the hyperreals, the complex and the hypercomplex. 
As we will see, the notions of positivity of \cref{ssec:tsp} can be transferred wholesale to the hyperreals and hypercomplex. This can be seen by applying linear algebra techniques directly to these other fields, or by using the transfer principle (\cref{thm:transfer}).

The \emph{hyperreal} field  $\Hr$  can be defined as the set of infinite sequences of real numbers modulo a certain equivalence relation ${\Hr} =  \R^\N /\sim$ (see Appendix \ref{app:hyper}). 
We can embed any real number $a\in \R$ into the hyperreals by identifying it with the equivalence class of the sequence $(a,a,a,a, \ldots)\in \R^\N$.  
The hyperreals are in many ways similar to $\R$ but contain extra infinitesimal and infinite elements (\cref{fig:hyperreals}), arising for example from sequences that converge to 0 and diverge, respectively. 
Every hyperreal $b$ is surrounded by a set of elements that are infinitely close to $b$ with respect to $\R$. The set of all such elements is  called the \emph{halo} of $b$.\footnote{That's what Beyonc\'e had in mind when writing \href{https://www.youtube.com/watch?v=bnVUHWCynig}{Halo}.}
And conversely: every non-infinite hyperreal $x$ is infinitesimally close to exactly one element $a$ of the real numbers, called the \emph{shadow} (or standard part), denoted $\textrm{sh}(x) = a$.


The \emph{hypercomplex} field $\Hc$ is the complexification  of the hyperreals \cite{Be19}, 
$$
\Hc =  \Hr +   \Hr i, 
$$
where $i^2 = -1$. 
We denote the space of $d \times d$ matrices over $\Hc$ by $\mc{M}_d(\Hc)$. 
A  `quantum state' on the hypercomplex field is described by a matrix $M \in \mc{M}_d(\Hc)$ which is Hermitian, positive semidefinite (i.e.\ with nonnegative eigenvalues) and of trace $1$. We call them \emph{hyperquantum states}:
\begin{center}
Quantum state on the hypercomplex = Hyperquantum state\footnote{At the risk of reminding the reader of \href{https://www.youtube.com/watch?v=7Twnmhe948A}{Hyper Hyper}.}
\end{center}

Some results in  $\Hr$ can be transferred to $\R$, and vice versa. 
Transferring a non-infinite element in the hyperreals to the reals means taking its shadow, 
and  transferring a real $a\in \R$   to the hyperreals means embedding it in $ \Hr$. 
On the other hand, transferring a formula from $\R$ to $\Hr$ means taking its *-transform \cite{Go98}, 
which essentially amounts to replacing $ \forall x \in \R$ by $\forall x \in \Hr$, 
and the same for existential quantifiers. (This is so because the *-transforms of  relations $=,>,<,\neq$ remain the same.) 
The results that can be transferred from the reals to the hyperreals and vice versa are those that can be expressed in \emph{first-order logic}, that is, that involve quantifiers only over the domain of discourse, 
which is $\R$ and $\Hr$ in our case, or $\C$ and $\Hc$ later on: 

\begin{theorem}[Transfer principle \cite{Go98}]\label{thm:transfer}
An $\mc{L}$-sentence $\phi$ is true if and only if its *-transform $^*\phi$ is true. 
\end{theorem}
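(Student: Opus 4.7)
The plan is to establish the transfer principle via the ultrapower construction of $\Hr$, following the standard route through \L o\'s's theorem. The first step is to fix a free (non-principal) ultrafilter $\mc{U}$ on $\N$ and realize $\Hr$ concretely as the quotient $\R^\N/\mc{U}$, with two sequences identified whenever they agree on a set belonging to $\mc{U}$. Every constant, relation and function symbol of $\mc{L}$, interpreted in the superstructure over $\R$, is then lifted to $\Hr$ pointwise along representatives; this pointwise lift is exactly the $*$-transform at the symbol level.

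The heart of the argument is the following inductive claim (\L o\'s's theorem): for every $\mc{L}$-formula $\phi(x_1,\ldots,x_k)$ and all elements $[(a_1^{(n)})_n],\ldots,[(a_k^{(n)})_n]\in\Hr$, the $*$-transform $^*\phi$ holds in $\Hr$ at these arguments if and only if $\phi$ holds in $\R$ at $(a_1^{(n)},\ldots,a_k^{(n)})$ for a $\mc{U}$-large set of indices $n$. I would prove this by induction on the complexity of $\phi$. The atomic case is immediate from the pointwise definition of the lifts. The cases of $\wedge$ and $\vee$ use the closure of $\mc{U}$ under finite intersections and supersets, while the case of $\neg$ exploits the ultrafilter dichotomy: for every $A\subseteq\N$, exactly one of $A$ and $\N\setminus A$ lies in $\mc{U}$.

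The main obstacle is the existential quantifier step. To pass from the statement that $\psi(\ldots,y)$ holds in $\R$ at some witness $b^{(n)}$ for $\mc{U}$-almost every $n$ to the statement that $^*\psi$ has a witness in $\Hr$, one must assemble the coordinatewise witnesses $b^{(n)}$ into a single element $[(b^{(n)})_n]\in\Hr$; this is precisely where the axiom of choice enters, and it works only because the ultrafilter supplies a coherent notion of largeness. The universal quantifier reduces to the existential one through $\neg$. Once the inductive claim is in hand, the transfer principle follows by specializing to sentences ($k=0$): the ``$\mc{U}$-large index set'' condition degenerates to ``$\N\in\mc{U}$'', which is always true, so the right-hand side holds iff $\phi$ holds in $\R$, yielding the claimed equivalence.
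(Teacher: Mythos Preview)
The paper does not give its own proof of this statement; it is quoted from Goldblatt~\cite{Go98} and used as a black box. Your sketch via the ultrapower construction and \L o\'s's theorem is the standard textbook argument and is correct for $\mc{L}$-sentences in the language of ordered rings, which is all that is claimed here. One minor remark: your passing reference to the ``superstructure over $\R$'' is unnecessary at this level---the full superstructure machinery is only needed for the stronger transfer principle with bounded higher-type quantifiers, whereas for first-order $\mc{L}$-sentences the plain ultrapower $\R^\N/\mc{U}$ and \L o\'s's theorem suffice exactly as you outline.
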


The symbols $+, \cdot, 0,1$, $\leq$ define the so-called language $\mc{L}$ of ordered rings. 
An $\mc{L}$-sentence is a formal statement that is written with quantifiers, these symbols and finitely many variables. 
The fact that quantifiers run over the whole domain is the defining feature of first-order logic. 

The transfer principle is proven for real closed fields, 
but it can easily be extended to complex versions thereof 
by considering the real and imaginary parts.

All notions of positivity of \cref{ssec:tsp} apply to the hypercomplex too. In particular, for 
 the linear map 
\be \label{eq:P}
\Pp:\mc{M}_{d_1}(\Hc) \rightarrow\mc{M}_{d_2}(\Hc),
\ee
we will consider the same ways of preserving positivity as in \cref{def:pos}.  
Note that Choi's Theorem applies over $\Hc$ because Equation \eqref{eq:Choithm} can be transferred to $\C$, or alternatively because the proof of Choi's Theorem works over $\Hc$. 
The definition of tsp over the hypercomplex is identical to \cref{def:tsp}.

\section{Essential tsp maps in the hypercomplex field}\label{sec:hyper}

Here we prove existence of essential tsp maps in the hypercomplex field (\cref{ssec:non-triv-tsp}) and 
 existence of NPT bound entangled hyperquantum states (\cref{ssec:npt-be}). 
We then reprove  existence of essential tsp maps in the hypercomplex field with a geometric argument, 
and show other geometric properties of the set of tsp maps   (\cref{ssec:geom}). 

\subsection{Essential tsp maps on the hypercomplex}
\label{ssec:non-triv-tsp}

Here we prove the existence of essential tsp maps on the hypercomplex (\cref{thm:gen}). To this end,   
consider a linear map $\Pp$ whose Choi matrix $C_\Pp \in  \mc{M}_{d_1} \otimes \mc{M}_{d_2}$ has the following properties (P):  \label{P}
\begin{enumerate}[label=(P\arabic*), ref=(P\arabic*)]
\item  \label{P1} 
$C_\Pp$ is separable, 
\item \label{P2} 
$C_\Pp$ is rank deficient, and 
\item \label{P3} 
The zero vector is the only product vector in the kernel of $C_\Pp$. 
\end{enumerate}
Then Ref.\  \cite{Mu16} proves the following two statements: 
\begin{enumerate}[label=\arabic*., ref=Statement \arabic*]
\item \label{1}
$(C_\Pp-\epsilon \mathbbm{1})^{\otimes n} $  
is block positive for any 
\be \label{eq:eps}
\epsilon \in [0, \sqrt[n]{||C_\Pp||^n_{\infty}+\mu^n} - ||C_\Pp||_{\infty}],
\ee
where 
\begin{align*} 
\mu = \textrm{min}\{& (\bra{\psi}\otimes \bra{\phi})  C_\Pp(\ket{\psi}\otimes \ket{\phi}) \: \\
& \textrm{ where } \ket{\psi} \in \C^{d_1}, \ket{\phi}\in \C^{d_2},  \braket{\psi|\psi}= \braket{\phi|\phi} = 1\}  
\end{align*}
and $||C_\Pp||_{\infty}$ denotes the operator norm, which is given by the maximal singular value of $C_\Pp$.
\item \label{2}
The matrices $C_\Pp-\epsilon \mathbbm{1}$ and $C_\Pp^{T_B}-\epsilon \mathbbm{1}$ are not psd for any $\epsilon>0$.  
\end{enumerate}
\ref{1} says that for any $n$,  there is an $n$-tsp map, and \ref{2} shows that this $n$-tsp map is essential. 
Together they imply the existence of an essential $n$-tsp map for every $n$.

The following Choi matrix  $C_\Pp \in \mc{M}_{d_1}\otimes \mc{M}_{d_2}$  satisfies \hyperref[P]{(P)}:
\begin{align} \label{eq:example_map} 
C_\Pp = 
(\ket{11} + \ket{22})(\bra{11} + \bra{22}) + \ket{12}\bra{12}+\ket{21}\bra{21} + \sum_{i>2 \textrm{ or } j>2} \ket{ij}\bra{ij} 
\end{align} 
for any $d_1, d_2>2$ \cite{Mu16}. 
For the construction of an essential tsp map on the hypercomplex, we use the following property of the above statements: 

\begin{lemma}[First order logic]
For fixed $n\in \N$, \ref{1} and \ref{2} are   first order sentences in the language of ordered rings (using the entries of $C_\Pp$ as coefficients).  
\label{lem:first_order}
\end{lemma}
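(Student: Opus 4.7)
The plan is to exhibit explicit $\mc{L}$-sentences equivalent to \ref{1} and \ref{2}, whose only non-logical constants beyond $+,\cdot,0,1,\leq$ are the entries of $C_\Pp$ (with complex entries split into their real and imaginary parts, so that all quantifiers range over a real-closed field). Since $n$ is fixed, every matrix operation that appears -- addition, multiplication, subtraction of a scalar multiple of $\id$, partial transposition, and the $n$-fold tensor product -- is a polynomial of bounded degree in these entries and in the auxiliary quantified variables, so the only real work is to eliminate the non-polynomial ingredients ($n$-th root, operator norm, and the infimum defining $\mu$) by introducing suitable quantifiers.

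For \ref{1} the key observation is that, under $\epsilon \geq 0$ and $s \geq 0$, the constraint $\epsilon \leq \sqrt[n]{s^n + m^n} - s$ is equivalent to the polynomial inequality $(\epsilon + s)^n \leq s^n + m^n$, which eliminates the $n$-th root entirely. We then pin $s = \|C_\Pp\|_\infty$ and $m = \mu$ to their intended values by first-order definitions; for $s$,
\[
s \geq 0 \,\wedge\, \forall \ket{v}\bigl(\braket{v|v}=1 \Rightarrow \bra{v}\du{C_\Pp} C_\Pp \ket{v} \leq s^2\bigr) \,\wedge\, \exists \ket{v}\bigl(\braket{v|v}=1 \wedge \bra{v}\du{C_\Pp} C_\Pp \ket{v} = s^2\bigr),
\]
and analogously for $m$ with the inner quantifiers restricted to unit product vectors $\ket{\psi}\otimes\ket{\phi}$ and the matrix replaced by $C_\Pp$ itself. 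Block positivity of $(C_\Pp - \epsilon\id)^{\otimes n}$ is in turn the universal sentence that $(\bra{a}\otimes\bra{b})(C_\Pp - \epsilon\id)^{\otimes n}(\ket{a}\otimes\ket{b}) \geq 0$ holds for every pair of unit vectors $\ket{a}, \ket{b}$, a polynomial inequality in the components of $\ket{a}, \ket{b}, \epsilon$ and the entries of $C_\Pp$. Wrapping all of this in $\forall \epsilon \forall s \forall m$, with the natural implication structure, yields an $\mc{L}$-sentence equivalent to \ref{1}.

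For \ref{2} the translation is shorter: failure of positive semidefiniteness of a Hermitian matrix $M$ is directly the existential sentence $\exists \ket{v}\bigl(\bra{v} M \ket{v} < 0\bigr)$, and applying this to $M = C_\Pp - \epsilon\id$ and to $M = C_\Pp^{T_B} - \epsilon\id$ inside an outer $\forall \epsilon > 0$ already gives an $\mc{L}$-sentence. The main (essentially technical) obstacle is the bookkeeping of the real/imaginary decomposition, together with the soundness check that the existential clauses in the operator-norm and $\mu$ characterisations are actually attained, so that the auxiliary variables $s, m$ are uniquely pinned down and the elimination of the $n$-th root is semantically correct; attainment follows from compactness of the unit sphere, which at fixed dimension is itself a first-order consequence of the polynomial nature of the defining inequalities.
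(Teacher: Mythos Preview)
Your proposal is correct and follows the same strategy as the paper's proof (express block positivity and non-psd-ness via quantifiers over the field), but is considerably more explicit: where the paper simply notes that ``the upper bound for $\epsilon$ is determined by $n$,'' you actually show how to eliminate the $n$-th root via $(\epsilon+s)^n\leq s^n+m^n$ and how to pin down $\|C_\Pp\|_\infty$ and $\mu$ by first-order characterisations with attainment clauses. The extra detail is sound and arguably makes the lemma's claim more convincing than the paper's own sketch.
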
 

\begin{proof}
For fixed $n$, $d_1$ and $d_2$, \ref{1}  is an expression where one quantifies over sufficiently small $\epsilon$. 
The upper bound for $\epsilon$ is determined by $n$, 
and for each of these $\epsilon$,  block positivity of $ (C_\Pp-\epsilon \mathbbm{1})^{\otimes n}$ can be expressed in terms of quantifiers over the field $\C$. 
\ref{2} is also a first order logic statement, 
since one quantifies over $\epsilon$ and checks positive semidefiniteness. 
\end{proof}

We are now ready to present the first main result of this work: 

\begin{theorem}[Essential tsp maps on the hypercomplex -- First main result]\label{thm:gen}
Let $\eta \in \Hr$ be a positive infinitesimal and  $\mc{P}: \mc{M}_{d_1} \rightarrow \mc{M}_{d_2}$ be a map whose Choi matrix $C_\Pp$ satisfies properties \hyperref[P]{(P)}.
Then the map 
\be \label{eq:Ppeta}
\Pp_{\eta}: \mc{M}_{d_1}(\Hc) \rightarrow \mc{M}_{d_2}(\Hc)
\ee 
with Choi matrix 
\be \label{eq:CPpeta}
C_{\Pp_{\eta}} = C_\Pp-\eta \mathbbm{1} 
\ee
is an essential tsp map on the hypercomplex. 
\end{theorem}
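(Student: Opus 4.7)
The plan is to leverage the two statements of \cite{Mu16} recalled just before the theorem together with the transfer principle. Concretely, for each fixed $n\in\N$, \cref{lem:first_order} tells us that \hyperref[1]{Statement 1} and \hyperref[2]{Statement 2} are first order $\mc{L}$-sentences over the ordered field $\R$ (with the entries of $C_\Pp$, $\|C_\Pp\|_\infty$, and $\mu$ as real-valued parameters). I would therefore proceed as follows: (i) apply the *-transform of \hyperref[1]{Statement 1} via \cref{thm:transfer} to the hypercomplex setting, (ii) verify that the positive infinitesimal $\eta$ lies inside the (transferred) interval from \eqref{eq:eps}, (iii) conclude that $C_{\Pp_\eta}^{\otimes n}$ is block positive over $\Hc$ for every $n$, and finally (iv) use the *-transform of \hyperref[2]{Statement 2} to rule out complete positivity and complete co-positivity.

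First I would verify that the real number
$b_n := \sqrt[n]{\|C_\Pp\|_\infty^n+\mu^n} - \|C_\Pp\|_\infty$
is strictly positive for every $n$. This reduces to showing $\mu>0$. Since $C_\Pp$ is separable by \ref{P1}, in particular $C_\Pp\geqslant 0$, so $\mu\geqslant 0$; and if $\mu$ were $0$, by compactness of the product of unit spheres in $\C^{d_1}\times \C^{d_2}$ there would exist a unit product vector $\ket{\psi}\otimes\ket{\phi}$ annihilated by $C_\Pp$, contradicting \ref{P3}. Hence $b_n>0$. Since $\eta$ is a positive infinitesimal, $0<\eta<b_n$ holds in $\Hr$ for every $n\in\N$, which places $\eta$ in the transferred interval.

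Next, for each fixed $n$, I would apply \cref{thm:transfer} to the $\mc{L}$-sentence asserting \hyperref[1]{Statement 1}. The *-transformed sentence reads: for every $\epsilon\in\Hr$ with $0\leq\epsilon\leq b_n$, the matrix $(C_\Pp-\epsilon\mathbbm{1})^{\otimes n}\in\mc{M}_{(d_1d_2)^n}(\Hc)$ is block positive in the hypercomplex sense. Specialising to $\epsilon=\eta$ yields that $C_{\Pp_\eta}^{\otimes n}=(C_\Pp-\eta\mathbbm{1})^{\otimes n}$ is block positive over $\Hc$, and since this holds for every standard $n\in\N$, the Choi--Jamio\l kowski correspondence (which itself transfers to $\Hc$) gives $\Pp_\eta^{\otimes n}\succcurlyeq 0$ for every $n$, i.e.\ $\Pp_\eta\in\mathrm{TSP}$. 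For essentiality, applying \cref{thm:transfer} to \hyperref[2]{Statement 2} shows that for every $\epsilon\in\Hr$ with $\epsilon>0$, neither $C_\Pp-\epsilon\mathbbm{1}$ nor $C_\Pp^{T_B}-\epsilon\mathbbm{1}$ is psd over $\Hc$; setting $\epsilon=\eta$ and using $(C_\Pp-\eta\mathbbm{1})^{T_B}=C_\Pp^{T_B}-\eta\mathbbm{1}$ gives $\Pp_\eta\notin\mathrm{CP}\cup\mathrm{coCP}$.

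The main obstacle I foresee is bookkeeping around the transfer step: one must ensure that \hyperref[1]{Statement 1} and \hyperref[2]{Statement 2} are genuinely first order for each \emph{fixed} $n$ (the quantification over $n$ is done externally, at the meta-level), and that the constants $\|C_\Pp\|_\infty$, $\mu$, and $b_n$ appearing in the *-transform coincide with their real counterparts so that $\eta<b_n$ really lands $\eta$ inside the transferred interval. The rest is essentially a specialisation of the known real results of \cite{Mu16} at an infinitesimal $\epsilon=\eta$, which is exactly what the hyperreals were introduced to permit.
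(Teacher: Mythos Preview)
Your proposal is correct and follows essentially the same route as the paper: transfer \hyperref[1]{Statement~1} and \hyperref[2]{Statement~2} to $\Hc$ via \cref{lem:first_order} and \cref{thm:transfer} for each fixed $n$, then specialise at the infinitesimal $\epsilon=\eta$, using that $\eta$ lies below every positive real bound $b_n$. Your explicit verification that $\mu>0$ (hence $b_n>0$) from \ref{P1} and \ref{P3}, and your remark that the quantification over $n$ is handled externally at the meta-level, are details the paper leaves implicit but which are exactly what is needed.
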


\begin{proof}
By \cref{lem:first_order}, for a fixed $n$, \ref{1} and \ref{2} are  sentences in first order logic, so they can be transferred to the hypercomplex using the transfer principle (\cref{thm:transfer}).

For every $n$ we choose $\epsilon_n$ equal to the upper bound of \eqref{eq:eps}. 
Since every positive infinitesimal $\eta \in \Hr$ satisfies that $\eta < \epsilon_n$ for all $n\in\N$, 
we conclude that the map of Equation \eqref{eq:Ppeta} whose Choi matrix is given by \eqref{eq:CPpeta} 
is $n$-tsp for all $n$. 
It is also essential, since $\eta>0$ and thus both its Choi matrix $C_{\Pp_{\eta}} $ and its partial transpose  have at least one negative eigenvalue by \ref{2}. 
\end{proof}

Note that this proof holds for any real closed field with infinitesimal elements (in particular, it holds for ``smaller" extensions of the reals that also contain infinitesimals).    

Note also that \cref{thm:gen} cannot be stated in first order logic, 
because  to express tensor stable positivity one needs to quantify over all $\N$, which is not the domain of discourse. 
So it cannot be transferred to the complex numbers. 

Moreover, the example of an essential tsp map in the hypercomplex of \cref{thm:gen} becomes a trivial tsp map in the complex. Namely, the shadow of $C_{\Pp_\eta} $ is  $C_\Pp$ (because all infinitesimals are sent to zero), which corresponds to a trivial tsp map, because of \ref{P1}. 

The result of \cref{thm:gen} can also be intuitively understood as follows: 
Over the hypercomplex, there are trivial tsp maps at the boundary of $\textrm{CP} \cup \textrm{coCP}$ that have essential tsp maps in their halo (\cref{fig:tsp_halo}). When transferring back to the complex numbers, all infinitesimals are set to zero and this halo disappears.

Finally, there exist essential tsp maps in $\ell^2$, as we show in \cref{thm:l2} in Appendix \ref{app:l2}. 
In contrast to $\Hc$, $\ell^2$ is a Hilbert space---yet not with our notion of positivity.

\subsection{NPT bound entangled hyperquantum states}
\label{ssec:npt-be}

Now we show that there exist NPT bound entangled hyperquantum states (\cref{cor:halos}).  
We start by defining entanglement distillation on $\Hc$. 

\begin{definition}[Entanglement distillation on $\Hc$]\label{def:distill}
A hyperquantum state $\rho$ 
is \emph{distillable}  
if its shadow $\textrm{sh}(\rho)$ is distillable. \footnote{The usual definition of distillability cannot be transferred to the hypercomplex numbers, because the definition involves approximations by converging sequences, and, over the hypercomplex, only sequences that become constant converge.}.
\end{definition}

 By \cite[Theorem 4]{Mu16},  if there exists an essential tsp map $\mc{P}:\mc{M}_{d_1} \rightarrow \mc{M}_{d_2}$, then there exist NPT bound entangled states in $\mc{M}_{d_1} \otimes \mc{M}_{d_1}$ and in $\mc{M}_{d_2} \otimes \mc{M}_{d_2}$. 
Moreover, the proof of \cite[Theorem 4]{Mu16} is constructive and gives a recipe to transform an essential tsp map into an NPT bound entangled state. 
Here we follow this recipe to construct an NPT bound entangled hyperquantum state:

\begin{example}[NPT bound entangled hyperquantum state]\label{ex:npt}
Consider the Choi matrix $C_\Pp$ defined in \eqref{eq:example_map} for $d_1=d_2=3$. 
Since it satisfies \hyperref[P]{(P)}, 
the map $C_{\Pp_\eta}$ of \eqref{eq:CPpeta} is essential tsp on $\Hc$ for infinitesimal $\eta > 0$.  
Following the proof of \cite[Theorem 4]{Mu16} we obtain the matrix 
\beq
A = \sqrt{\frac{3}{2}}\begin{pmatrix} 
0& 1&0\\ 
 -1& 0&0\\
 0& 0&0
\end{pmatrix}   \in \mc{M}_3(\Hc)
\eeq
and define a new Choi matrix $D$ via a so-called local filtering operation, 
\beq
D := (A^\dagger \otimes  \mathbbm{1}_3) C_{\Pp_\eta} ( A  \otimes \mathbbm{1}_3) 
\eeq
where ${}^\dagger$ denotes complex conjugation and transposition. 
It is easy to verify that the partial transpose is not positive, specifically 
\beq
\bra{  \Omega} D^{T_2} \ket{ \Omega} < 0
\eeq
where $|\Omega\ra$ is the hypercomplex maximally entangled state of dimension 3.  

We define the following hyperquantum state as the so-called $U$-twirl of $D$  
$$
\rho  =  \frac{1}{\mathrm{tr}(D)}  \left[
\left( \frac{\mathrm{tr}(D)}{ 8} - \frac{\mathrm{tr}(D  \mathbb{F} _3)}{ 24} \right) 
(\mathbbm{1}_3 \otimes \mathbbm{1}_3) -  
\left( \frac{\mathrm{tr}(D)}{ 24} - \frac{\mathrm{tr}(D \mathbb{F}_3)}{ 8} \right)  \mathbb{F}_3 \right]
$$
resulting in 
\be\label{eq:rho-eta} 
\rho  
 = 
\begin{pmatrix}
 \alpha - \beta &  &&&&&&&   \\
& \alpha && -\beta & &&&& \\
& &\alpha &&& &-\beta && \\
 & -\beta &&\alpha&&&&& \\
&&&&\alpha- \beta&&&& \\
&&&&&\alpha&&- \beta& \\
&&- \beta&&&&\alpha&& \\
&&&&&- \beta&&\alpha& \\
&&&&&&&&\alpha- \beta  \\
\end{pmatrix}
,
\ee
where all unwritten entries are $0$ and where we have defined 
\be \label{eq:alpha}
\alpha = \frac{1}{  8 } \left(  1  + \frac{\eta}{9(1-\eta)}\right)
\quad \textrm{and} \quad  
\beta = \frac{1}{ 8}\left(\frac{1}{ 3} + \frac{\eta}{ 3(1-\eta)}\right). 
\ee
Recall that $\alpha,\beta,\eta$ and all matrix entries are hyperreal. 

We claim that $\rho$ of \eqref{eq:rho-eta} is an NPT bound entangled hyperquantum state. 
First, it can be easily checked that $\tr(\rho)=1$. 
Moreover, $\rho$ is psd if $\eta \leq \frac{3}{4}$, which is the case here since it is a positive infinitesimal. 
Furthermore $\rho$ is NPT for
\be \label{eq:eta}
0<\eta<1,
\ee
which is also the case. 
Since its shadow $\textrm{sh}(\rho)$  is PPT (and is thus not distillable), 
$\rho$ is an NPT bound entangled hyperquantum state.
\demo\end{example}

\begin{remark}[NPT bound entangled hyperquantum states via the transfer principle]
The conclusion of \cref{ex:npt} can  be reached via the transfer principle (\cref{thm:transfer}). 
For real $\eta \leq \frac{3}{4}$, 
$\rho$ of \eqref{eq:rho-eta} is psd. 
Transferring  to the hyperreals, $\rho$ is psd for hyperreal $\eta \leq  \frac{3}{4}$, which includes all positive infinitesimals. 
Furthermore, $\rho^{T_B}$ is not psd for $\eta$ in the range \eqref{eq:eta},
which can be transferred as well and shows that $\rho$ is NPT. 
When transferring back to $\C$, the shadow of $\eta$ is 0, 
and we are left with a PPT state. 
\end{remark}

\cref{ex:npt} shows that (\cref{fig:halo}): 

\begin{corollary}[NPT bound entangled hyperquantum state]\label{cor:halos}
There exist NPT bound entangled hyperquantum states. 
\end{corollary}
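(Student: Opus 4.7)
The plan is to prove \cref{cor:halos} by direct exhibition: I would take the hyperquantum state $\rho$ constructed explicitly in \cref{ex:npt} and verify that it satisfies the three required properties, namely that (i) it is a valid hyperquantum state (Hermitian, trace one, positive semidefinite), (ii) its partial transpose is not positive semidefinite (NPT), and (iii) it is not distillable in the sense of \cref{def:distill}. The construction itself is obtained by feeding the essential tsp map $\Pp_\eta$ on $\Hc$ provided by \cref{thm:gen} into the recipe of \cite[Theorem 4]{Mu16}: apply the local filtering $D = (A^\dagger \otimes \id_3)C_{\Pp_\eta}(A \otimes \id_3)$ with the explicit $A$ given in the example, and then symmetrize by a $U$-twirl as in \eqref{eq:utwirl} to obtain the Werner-like matrix $\rho$ in \eqref{eq:rho-eta}.

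For the verification of (i) and (ii) I would use the transfer principle (\cref{thm:transfer}). The statements ``$\tr(\rho) = 1$'', ``$\rho \succcurlyeq 0$'' (which for a Hermitian matrix reduces to a first-order condition on the coefficients of its characteristic polynomial), and ``$\rho^{T_B}$ has a negative eigenvalue'' are all first-order sentences in the language of ordered rings whose parameters are the hyperreal scalars $\alpha,\beta,\eta$ from \eqref{eq:alpha}. Since these statements hold over $\R$ in the stated ranges ($\eta \leq 1/2$ for positivity, $0 < \eta < 3/2$ for NPT), they transfer to the hyperreals; because a positive infinitesimal $\eta$ lies in both ranges, the hyperquantum state $\rho$ is psd and NPT.

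For (iii), I would apply \cref{def:distill}: $\rho$ is distillable on $\Hc$ iff $\textrm{sh}(\rho)$ is distillable on $\C$. Since $\textrm{sh}(\eta) = 0$, the shadow of $\rho$ equals the complex matrix obtained by setting $\eta = 0$ in \eqref{eq:rho-eta}, and a direct inspection shows that this complex matrix is PPT. By Horodecki's result \cite{Ho98}, a PPT entangled state is bound entangled, hence $\textrm{sh}(\rho)$ is non-distillable and therefore $\rho$ is non-distillable in the sense of \cref{def:distill}.

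There is no real obstacle here since the heavy construction has already been carried out in \cref{ex:npt}; the only subtle point is that distillability over $\Hc$ is not a first-order notion (it involves quantification over all $n \in \N$ and all LOCC protocols), which is precisely why we define it on $\Hc$ by reduction to the shadow. Once that convention is in place, the proof of the corollary reduces to collecting the three verifications above and concluding that $\rho$ is a hyperquantum state which is simultaneously NPT and non-distillable, establishing \cref{cor:halos}.
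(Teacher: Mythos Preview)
Your proposal is correct and follows essentially the same approach as the paper: the corollary is obtained directly from \cref{ex:npt}, and the paper's justification amounts to nothing more than pointing to that example. Your write-up is in fact more detailed than the paper's, which simply notes that the explicit $\rho$ constructed there does the job; one small remark is that for (iii) you do not need $\textrm{sh}(\rho)$ to be entangled---PPT alone already implies non-distillability by \cite{Ho98}, so the word ``entangled'' in your appeal to Horodecki's result is superfluous (and in fact $\textrm{sh}(\rho)$ is separable).
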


\subsection{Geometry of tsp maps}
\label{ssec:geom}
Here we prove again the existence of essential tsp maps over $\Hc$ by a geometric argument. 
To this end, we first investigate the geometry of  the subsets of \textrm{POS} in the complex and hypercomplex 
(\cref{lem:geom}).

For a real closed field $R$ (such as $\R$ and $\Hr$),  
a \emph{semialgebraic set} is a subset of $R^n$ defined by finitely many polynomial equations of the form 
$p(x_1,   \ldots, x_n) \geq 0$ and Boolean combinations thereof. 
By the quantifier elimination theorem \cite{Pr01}, a set defined by a first-order formula in the language of ordered rings (possibly including quantifiers) is semialgebraic. 
We consider sets of Hermitian matrices, which form a real vector space, where they are (potentially) semialgebraic:

\begin{lemma}[Geometry of tsp maps] \label{lem:geom}
The following statements hold   for $\C$ and $\Hc$: 
\begin{enumerate}[label=(\roman*),ref=(\roman*)]
\item  \label{lem:geom:i}
$\textrm{CP}  \cup \textrm{coCP}$  
 is a semialgebraic set.
 \item \label{lem:geom:ii}
$\textrm{TSP}_n$ is semialgebraic for every $n$.

\item \label{lem:geom:iii}

$\textrm{TSP}_n \supseteq \textrm{TSP}_{n+1}$ for all $n\in \N$. 

\end{enumerate}
\end{lemma}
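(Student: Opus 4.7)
The plan is to recast each set via the Choi matrix $C_\Pp$ and combine two classical ingredients: that a Hermitian matrix being positive semidefinite is a finite conjunction of polynomial inequalities in its entries (Sylvester's criterion, or nonnegativity of all coefficients of the characteristic polynomial), and the Tarski--Seidenberg quantifier elimination theorem for real closed fields. Everything will go through uniformly over $\C$ and $\Hc$: both $\R$ and $\Hr$ are real closed fields and all the linear algebra used (eigenvalues, principal minors, the Choi--Jamio\l kowski correspondence) is field-agnostic. Alternatively, one can lift the $\C$-versions of \ref{lem:geom:i}--\ref{lem:geom:ii} to $\Hc$ via the transfer principle (\cref{thm:transfer}).

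For \ref{lem:geom:i}, I would use that $\Pp\in\textrm{CP}$ iff $C_\Pp\geqslant 0$ and $\Pp\in\textrm{coCP}$ iff $C_\Pp^{T_B}\geqslant 0$, together with the fact that $\Pp\mapsto C_\Pp$ is linear. Both conditions are then finite conjunctions of polynomial inequalities in the entries of $\Pp$, so $\textrm{CP}$ and $\textrm{coCP}$ are each semialgebraic, and so is their union.

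For \ref{lem:geom:ii}, I would use that $\Pp^{\otimes n}$ is positive iff $C_{\Pp^{\otimes n}}$ is block positive, i.e.\ $(\bra{a}\otimes\bra{b})\,C_{\Pp^{\otimes n}}\,(\ket{a}\otimes\ket{b})\geq 0$ for all vectors $\ket{a},\ket{b}$. Splitting into real and imaginary parts, this is a universally quantified first-order sentence in the language of ordered rings, with the entries of $\Pp$ as parameters and the components of $\ket{a},\ket{b}$ as quantified variables; the entries of $C_{\Pp^{\otimes n}}$ are polynomials of degree $n$ in those of $C_\Pp$ and hence of $\Pp$. Quantifier elimination then yields that $\textrm{TSP}_n$ is semialgebraic.

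For \ref{lem:geom:iii}, the rightmost equality is the definition of $\textrm{TSP}$. For the nested chain, given $\Pp\in\textrm{TSP}_{n+1}$, one first shows that $\Pp$ itself is positive by examining the eigenvalues of $\Pp(Y)^{\otimes(n+1)} = \Pp^{\otimes(n+1)}(Y^{\otimes(n+1)})\geqslant 0$ for arbitrary psd $Y\in\mc{M}_{d_1}$. Then, for any psd $X\in\mc{M}_{d_1}^{\otimes n}$ and any psd $\rho\in\mc{M}_{d_1}$ for which $\Pp(\rho)$ has a strictly positive eigenvalue, psd-ness of $\Pp^{\otimes(n+1)}(X\otimes\rho) = \Pp^{\otimes n}(X)\otimes\Pp(\rho)$ forces $\Pp^{\otimes n}(X)\geqslant 0$. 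The main obstacle will be the sign ambiguity in the eigenvalue argument when $n+1$ is even — the tensor power $\lambda^{n+1}$ does not detect the sign of $\lambda$ — which has to be resolved by a convexity argument on the psd cone that rules out the \emph{anti-positive} alternative $-\Pp\succcurlyeq 0$; the semialgebraic claims \ref{lem:geom:i}--\ref{lem:geom:ii} are otherwise routine once the Choi correspondence is in hand.
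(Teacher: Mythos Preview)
Your treatment of \ref{lem:geom:i} and \ref{lem:geom:ii} matches the paper's: pass to the Choi matrix, invoke Sylvester's criterion for the psd condition and Tarski--Seidenberg quantifier elimination for block positivity, and note that finite unions of semialgebraic sets are semialgebraic.

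For \ref{lem:geom:iii} your route differs from the paper's. The paper tensors an arbitrary psd $X \in \mc{M}_{d_1}^{\otimes n}$ with $\mathbbm{1}$, uses $\Pp^{\otimes(n+1)}(X \otimes \mathbbm{1}) = \Pp^{\otimes n}(X) \otimes \Pp(\mathbbm{1}) \geqslant 0$, and then cancels the factor $\Pp(\mathbbm{1})$ (treating $\Pp(\mathbbm{1})=0$ separately, where one argues $\Pp=0$). Your eigenvalue analysis of $\Pp(Y)^{\otimes (n+1)}$ is more roundabout, but both approaches ultimately hinge on the same point: ruling out the anti-positive alternative.

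The obstacle you flag is real, and no convexity argument can remove it, because the anti-positive alternative actually occurs. Take $\Pp(X) = -\tr(X)\,\mathbbm{1}$; then $\Pp^{\otimes m}(Z) = (-1)^m \tr(Z)\,\mathbbm{1}^{\otimes m}$, so $\Pp \in \textrm{TSP}_{2k}$ but $\Pp \notin \textrm{TSP}_{2k-1}$ for every $k \geq 1$. Hence every inclusion $\textrm{TSP}_{2k} \subseteq \textrm{TSP}_{2k-1}$ in the chain fails as literally stated, and your hoped-for convexity rescue cannot succeed. The paper's own argument shares this blind spot: it invokes $\Pp(\mathbbm{1}) \geqslant 0$, which does not follow from $(n{+}1)$-tsp when $n{+}1$ is even. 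The definitional identity $\textrm{TSP} = \bigcap_n \textrm{TSP}_n$ is of course unaffected.
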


Each of these statements holds for a given, finite size. For example, $\textrm{CP}$ is the set of completely positive maps from $\mc{M}_{d_1}\to\mc{M}_{d_2}$, for fixed $d_1,d_2$. 

\begin{proof}
All arguments hold both for $\C$ and $\Hc$. 

\ref{lem:geom:i}. 
The condition of being completely positive translates to the Choi matrix being psd, 
which can be expressed as a finite set of inequalities in the matrix elements, 
namely that the determinant of every principal minor is nonnegative (Sylvester's criterion). 
Being completely co-positive translates to the partial transpose of the Choi matrix being psd, 
so Sylvester's criterion need only be applied to the partial transpose. 
Finally, the union of two semialgebraic sets $\textrm{CP}$ and $\textrm{coCP}$ is semialgebraic  by the definition of the latter. 

\ref{lem:geom:ii}
For $\Pp \in \textrm{TSP}_n$, the Choi matrix $C_{\Pp^{\otimes n}}$ is block positive, which is a semialgebraic condition on the entries of the Choi matrix (using quantifier elimination).  
Therefore, the set is semialgebraic.

\ref{lem:geom:iii}. 
First note that this holds trivially for the zero map. 
When a positive map $\Pp$ is not the zero map, it can be shown that $\Pp(\mathbbm{1}) \neq 0$.
Consider now a nonzero map $\Pp \in \textrm{TSP}_3$ and a psd matrix   $\sum_i X_i \otimes Y_i \geqslant 0$.  
Then 
\beq 
\left(\Pp \otimes \Pp \otimes \Pp\right)\left(\sum_i X_i \otimes Y_i \otimes  \mathbbm{1} \right) = \left(\sum_i \Pp(X_i) \otimes \Pp(Y_i)\right) \otimes \Pp(\mathbbm{1}) \geqslant 0,
\eeq
because $\sum_i X_i \otimes Y_i \otimes \mathbbm{1} \geqslant 0$. 
Since $\Pp(\mathbbm{1})\geqslant 0$ and $\Pp(\mathbbm{1})\neq 0$,  
then $\sum_i \Pp(X_i) \otimes \Pp(Y_i)\geqslant 0$ and so $\Pp \in \textrm{TSP}_2$. This holds iteratively for any $n$. 
It follows that a map that is in $\textrm{TSP}_n$ is also in $\textrm{TSP}_{n-1}$, i.e.\ the sets have a nested structure. 
\end{proof}

Over $\C$, TSP is an intersection of infinitely many closed sets, and is therefore  a closed set itself. 
Yet, an infinite intersection of semialgebraic sets need  not be semialgebraic, 
so it does not follow from \cref{lem:geom}  \ref{lem:geom:ii} and \ref{lem:geom:iii} that \textrm{TSP} be semialgebraic. 
If \textrm{TSP} were not semialgebraic then essential tsp maps would exist, 
because $\textrm{CP}  \cup \textrm{coCP}$ is semialgebraic by \cref{lem:geom} \ref{lem:geom:i}.

In the following, the set of tsp maps over $\C$ and $\Hc$ is denoted  $\textrm{TSP}$  and  $\textrm{TSP}(\Hc)$, respectively.  

\begin{proposition}[Essential tsp maps on the hypercomplex -- Geometric version]\label{thm:main-geom} 
If $\textrm{TSP}(\Hc)$ is semialgebraic, then there exist essential tsp maps over $\C$.
\end{proposition}

\begin{proof} 
Assume $\textrm{TSP}(\Hc)$ is semialgebraic.
In the hyperreals every countable cover of a semialgebraic set has a finite subcover \cite{Pr01}. 
The set $\textrm{TSP}(\Hc)$ is a countable intersection of semialgebraic sets (by \cref{lem:geom} \ref{lem:geom:iii}), 
\beq
\textrm{TSP}_1 \supseteq \textrm{TSP}_2 \supseteq   \ldots \supseteq \bigcap_n^\infty \textrm{TSP}_n = \textrm{TSP} .
\eeq
This implies that $ \textrm{TSP}_n(\Hc) =  \textrm{TSP}(\Hc)$ for some $n$. 
But for each fixed $k\in\N$, 
\beq
\textrm{TSP}_n(\Hc)= \textrm{TSP}_{n+k}(\Hc)
\eeq
is a statement that transfers to $\C$ by the transfer principle (\cref{thm:transfer}), implying that $\textrm{TSP}_n  = \textrm{TSP}$. 
For every $m$ there is an essential $m$-tsp map \cite{Mu16}, so in particular this holds for $m=n$.  
That is, there exist essential tsp maps over $\C$. 
 \end{proof}

It follows that there are essential tsp maps on the hypercomplex:\footnote{In other words, there is an \emph{essence} in the halo of a trivial tsp map.}
 
\begin{corollary}[Essential tsp maps on the hypercomplex -- Geometric version]\label{cor:main-geom} 
There exist essential tsp maps over $\Hc$. 
\end{corollary}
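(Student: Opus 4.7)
The plan is to deduce the corollary directly from Proposition~\ref{thm:main-geom} by showing that each of its two alternatives forces the existence of essential tsp maps on $\Hc$.

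First I would handle case~\ref{thm:main-geom:ii}. If $\textrm{TSP}(\Hc)$ is not semialgebraic, it cannot coincide with $\textrm{CP}(\Hc)\cup\textrm{coCP}(\Hc)$, which \emph{is} semialgebraic by Lemma~\ref{lem:geom}\ref{lem:geom:i}. Since the inclusion $\textrm{CP}(\Hc)\cup\textrm{coCP}(\Hc)\subseteq\textrm{TSP}(\Hc)$ always holds (trivial tsp maps are tsp), it must be strict in this case, and any map in the difference is an essential tsp map on $\Hc$.

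Next I would handle case~\ref{thm:main-geom:i}. Suppose an essential tsp map $\Pp$ exists over $\C$. I claim that the map whose Choi matrix is the same matrix $C_{\Pp}$, now viewed in $\mc{M}_{d_2}(\Hc)\otimes\mc{M}_{d_1}(\Hc)$, is an essential tsp map on $\Hc$. For each fixed $n$, being $n$-tsp is a semialgebraic, hence first-order, condition on the entries of $C_{\Pp}$ by Lemma~\ref{lem:geom}\ref{lem:geom:ii}; applying the transfer principle (Theorem~\ref{thm:transfer}) separately to each such sentence yields $n$-tsp over $\Hc$ for every $n\in\N$, and thus tsp. The negations of the Sylvester-type criteria that certify failure of CP for $C_{\Pp}$ and failure of coCP for $C_{\Pp}^{T_B}$ are likewise first-order in the entries of $C_{\Pp}$ and transfer verbatim, so the map remains neither CP nor coCP over $\Hc$. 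Combining the two cases gives the corollary.

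The main subtlety to watch is that tsp itself is not a first-order property, since it quantifies over all $n\in\N$; this is precisely why the statement of Theorem~\ref{thm:gen} cannot be transferred as a single formula, as the authors already note. The workaround used above is to transfer the $n$-tsp property one $n$ at a time and assemble tsp externally, which is legitimate because this universal quantification over $n$ lives in the metatheory rather than in the first-order language of ordered rings.
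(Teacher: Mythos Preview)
Your argument is correct and matches the paper's two-case split: case~\ref{thm:main-geom:ii} is handled identically, and in case~\ref{thm:main-geom:i} the paper merely asserts that an essential tsp map over $\C$ ``can be embedded'' in $\Hc$, which you justify in detail by transferring each $n$-tsp condition and the failure of CP/coCP separately. Your explicit handling of the subtlety that tsp is not itself first-order is a welcome elaboration, but not a departure from the paper's route.
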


\begin{proof}
From \cref{thm:main-geom} we conclude that at least one of the following statements is true: there exist essential tsp maps over $\C$ and/or $\textrm{TSP}(\Hc)$ is not semialgebraic. If there exist essential tsp maps over $\C$, these maps can be embedded in  $\Hc$, so there exist essential tsp maps over $\Hc$. If on the other hand $\textrm{TSP}(\Hc)$ is not semialgebraic, then $\textrm{TSP}(\Hc)$ cannot be $\textrm{CP}(\Hc)\cup \textrm{coCP}(\Hc)$, because the latter is semialgebraic. 
\end{proof}

\bigskip
\begin{center}
\deco \:\:  \emph{More on the geometry of tsp maps}\:\:\: \deco
\label{sssec:withoutinv}
\end{center}
\bigskip

Here we examine the geometry of various subsets of the cone of positive linear maps. The following results hold both over $\C$ and $\Hc$.

\begin{proposition}[Convexity]\label{pro:convexity}\quad
\begin{enumerate}[label=(\roman*),ref=(\roman*)]
\item \label{pro:i} 
Both $\textrm{CP}$ and $\textrm{coCP}$ are convex cones. 

\item \label{pro:ii} 
$\textrm{CP} \cap \textrm{coCP}$ is a convex, non-empty cone. 

\item \label{pro:iii} 
$\textrm{CP} \cup \textrm{coCP}$ is not convex. 

\item \label{pro:iv} 
$\textrm{TSP}$ is not convex. 

\item \label{pro:v}
$\textrm{TSP}$ is star convex with respect to every entanglement breaking map $\mc{Q}$. This means that the line segment from $\mc{Q}$ to any point in $\textrm{TSP}$ is contained in $\textrm{TSP}$. 
\end{enumerate}
\end{proposition}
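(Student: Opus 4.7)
The strategy is to use the Choi--Jamio\l kowski correspondence for (i)--(iii), a single explicit counterexample for (iii) and (iv), and a tensor-product lemma for entanglement breaking maps for (v).

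Parts (i) and (ii) follow from general convex analysis. The sets $\textrm{CP}$ and $\textrm{coCP}$ are preimages of the psd cone under the linear assignments $\mc{P}\mapsto C_{\mc{P}}$ and $\mc{P}\mapsto C_{\mc{P}}^{T_B}$, so both are convex cones, giving (i). Their intersection is then a convex cone, and it is non-empty because every entanglement breaking map (for instance $X \mapsto \tr(X)\,\id/d$) has a separable---hence psd and PPT---Choi matrix, giving (ii).

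For (iii) and (iv) the plan is to use one explicit witness, $\mc{R} := \tfrac{1}{2}(\textrm{id}+\theta)$ on $\mc{M}_d$ with $d \geq 2$, a convex combination of a $\textrm{CP}$ and a $\textrm{coCP}$ map. A short calculation gives $C_{\mc{R}} = \tfrac{1}{2}\ket{\Omega}\bra{\Omega} + \tfrac{1}{2d}\mathbb{F}_d$; evaluating this against the antisymmetric unit vector $(\ket{01}-\ket{10})/\sqrt{2}$ yields $-1/(2d)$, since $\ket{\Omega}$ lies in the symmetric subspace while $\mathbb{F}_d$ has eigenvalue $-1$ on the antisymmetric subspace. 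The computation for $C_{\mc{R}}^{T_B}$ is structurally identical and gives the same conclusion, proving (iii). For (iv), since $\textrm{id},\theta\in\textrm{TSP}$, convexity of $\textrm{TSP}$ would force $\mc{R}\in\textrm{TSP}$; but expanding directly yields $\mc{R}^{\otimes 2}(\ket{\Omega}\bra{\Omega}) = \tfrac{1}{2}\ket{\Omega}\bra{\Omega} + \tfrac{1}{2d}\mathbb{F}_d$, which is again negative on the antisymmetric subspace, so $\mc{R}$ is not even $2$-tsp.

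The main work is in (v), and the key lemma is: \emph{if $\mc{Q}$ is entanglement breaking and $\mc{R}$ is any positive map, then $\mc{Q}\otimes\mc{R}$ is positive}. To prove it I would use the rank-one Kraus decomposition $\mc{Q}(X) = \sum_k \langle b_k|X|b_k\rangle \ket{a_k}\bra{a_k}$, which exists \emph{iff} $\mc{Q}$ is entanglement breaking. Then for any psd $\rho$ on $A\otimes B$, $(\mc{Q}\otimes\mc{R})(\rho) = \sum_k \ket{a_k}\bra{a_k} \otimes \mc{R}\bigl(\langle b_k|_A \rho |b_k\rangle_A\bigr)$ is a sum of psd matrices, since the partial inner product $\langle b_k|_A \rho |b_k\rangle_A$ is psd and $\mc{R}$ preserves positivity. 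Given this lemma, I would expand $(t\mc{P}+(1-t)\mc{Q})^{\otimes n}$ as $\sum_{S\subseteq [n]} c_S \bigotimes_{i\in S}\mc{Q}\bigotimes_{i\notin S}\mc{P}$ with $c_S \geq 0$. Permuting tensor factors (a unitary conjugation preserving positivity) rewrites each term as $\mc{Q}^{\otimes |S|}\otimes \mc{P}^{\otimes |S^c|}$; here $\mc{Q}^{\otimes |S|}$ is again entanglement breaking (tensor products of separable Choi matrices are separable) and $\mc{P}^{\otimes |S^c|}$ is positive since $\mc{P}\in\textrm{TSP}$, so by the lemma the product is positive. A nonnegative combination of positive maps is positive, so $t\mc{P}+(1-t)\mc{Q}$ is $n$-tsp for every $n$, proving (v). I expect the main obstacle to be identifying the rank-one Kraus decomposition together with the permute-and-group argument as the right structural tools for (v); the other parts reduce to Choi-matrix bookkeeping and a single well-chosen counterexample.
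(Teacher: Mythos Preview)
Your proposal is correct and follows essentially the same approach as the paper: the same counterexample $\tfrac{1}{2}(\textrm{id}+\theta)$ is used for (iii) and (iv), and part (v) is proved by expanding $(\mc{Q}+\mc{T})^{\otimes n}$ and exploiting the entanglement-breaking decomposition of $\mc{Q}$ exactly as you do. The only cosmetic differences are that the paper shortcuts (iii) via the observation $\theta\circ\gamma=\gamma$ (so one need only check $\gamma\notin\textrm{CP}$), and in (v) it carries out the computation inline rather than isolating your ``$\mc{Q}\otimes\mc{R}$ is positive'' lemma.
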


Note that an object shaped like a star (like a starfish) is star convex, but not convex.  The set of trivial tsp maps, that is,  $\textrm{CP}\cup \textrm{coCP}$,  is star convex with respect to any point in the intersection, 
as can be seen in \cref{fig:maps}.

\begin{proof}
\ref{pro:i}. Obvious.

\ref{pro:ii}. 
The intersection of convex cones is convex. Furthermore, the intersection is non-empty, 
since every entanglement breaking map is both in $\textrm{CP}$ and $\textrm{coCP}$. 

\ref{pro:iii}. 
Define the map 
$$\gamma: A \mapsto \frac{1}{2} (A + \theta(A)).$$
We claim that $\gamma$ is neither in $\textrm{CP}$ nor in $\textrm{coCP}$. 
Since $\theta \circ \gamma = \gamma$,  
we only have to show that $\gamma \notin \textrm{CP}$. 
It is immediate to verify that $(\textrm{id} \otimes \gamma)   (|\Omega\ra \la \Omega|) \ngeqslant 0$.  

\ref{pro:iv} 
It is easily checked that the map of \ref{pro:iii} is not 2-tsp. 

\ref{pro:v}
Recall that an entanglement breaking map $\mc{Q} \in \textrm{CP} \cap \textrm{coCP}$ and admits a decomposition of the form of Equation \eqref{eq:decomp} with all $A_i,B_i \geqslant 0$. 
We claim that 
$$
\mc{Q}  + \mc{T} \in \textrm{TSP} \: \textrm{ for every }\mc{T}  \in \textrm{TSP}.
$$ 
 By scaling these maps, this shows that every map on the line between $\mc{T}$ and $\mc{Q}$ is tsp, so $\textrm{TSP}$ is a star convex cone. 
 
To prove the claim, start by noting that $(\mc{T} + \mc{Q})^{\otimes n}$ is a sum of maps of the form 
$$
\underbrace{\mc{T} \otimes \cdots \otimes \mc{T}}_{s} \otimes \underbrace{\mc{Q} \otimes \cdots \otimes \mc{Q}}_{r},
$$ 
where $s,r \in \N$ with $s + r = n$, together with all permutations of the tensor factors.  
Applying such a map to a psd input 
$$
\sum_i X^{[1]}_i \otimes  \cdots \otimes X^{[s]}_i \otimes Y^{[1]}_i \otimes  \cdots \otimes Y^{[r]}_i
$$ 
yields
$$
 \sum_{j_1,\ldots,j_r} \mc{T}^{\otimes s}
\left[\sum_i \textrm{tr}(B_{j_1}^T Y_i^{[1]}) \cdots \textrm{tr}(B_{j_r}^T Y_i^{[r]})  X^{[1]}_i \otimes \cdots \otimes X^{[s]}_i \right]
\otimes A_{j_1} \otimes \cdots \otimes A_{j_r}. 
$$
The expression in square brackets is psd because  all $B_i$'s are psd.  
Since $\mc{T}$ is tsp by assumption, $\mc{T}^{\otimes s}$ applied to the expression in square brackets is psd too. 
In addition, all $A_i \geqslant 0$ are psd by assumption. This proves that $\mc{Q}+\mc{T}$ is tsp.
\end{proof}

\cref{pro:convexity} \ref{pro:v} shows that $\textrm{TSP}$ is connected, and even more, that if there is an essential tsp map, it can be found by starting at the boundary of the trivial tsp maps and `walking' a small distance in the direction away from an entanglement breaking map $\mc{Q}$. 
More precisely, it will be of the form 
$$
\mc{B} - \epsilon \mc{Q}, 
$$ 
where $\mc{B}$ is on the boundary of the trivial tsp maps and  $\epsilon>0$ (and is real). 
Taking $\mc{Q}$ as the completely depolarizing map $\mc{Q}(X) = \textrm{tr}(X) \mathbbm{1}$, this is in fact where the examples of $n$-tsp maps from \cite{Mu16} are found. 
Also, when $\epsilon$ is an infinitesimal in the hyperreals, 
the map $\mc{B} - \epsilon \mc{Q}$ is in the halo of $\mc{B}$ (\cref{fig:tsp_star}).  

\begin{figure}[t]\centering
\includegraphics[width = 0.6\textwidth]{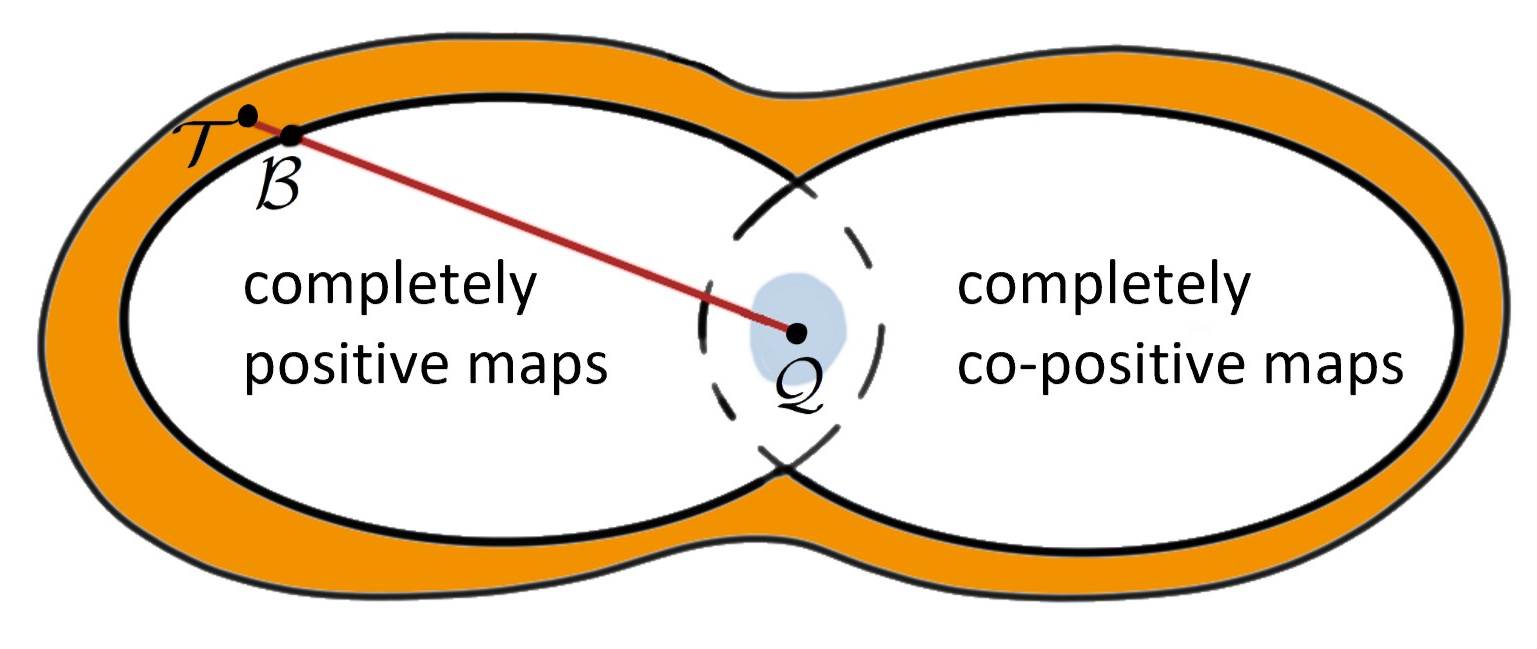}
\caption{\small 
Every tsp map $\mc{T}$ is of the form $\mc{B}-\epsilon \mc{Q}$, where $\mc{B}$ is a map at the boundary of the set of completely positive maps and completely co-positive maps, $\textrm{CP}\cup \textrm{coCP}$, and $\mc{Q}$ is an entanglement breaking map. The set of entanglement breaking maps is colored light blue. 
This follows from the star convexity of $\textrm{TSP}$ with respect to the set of entanglement breaking maps (\cref{pro:convexity} \ref{pro:v}). (Compare with \cref{fig:maps}).   
}
\label{fig:tsp_star}
\end{figure}

\section{Undecidability of a tensor stable positivity problem}
\label{sec:undec}

Here we prove the  second main result of this work (\cref{thm:main}). 
To this end, we start by defining the decision problem.  
The main actress will be the matrix multiplication (MaMu) state $\ket{\chi_n}$  defined in Equation \eqref{eq:mamu} \cite{Sc81, St86,Ch18}, which is a collection of $d$-dimensional maximally entangled states shared between $n$ neighboring sites (\cref{fig:tensors}). 
More precisely, the main actress will be the \emph{projector} onto the MaMu state: 
$$
\chi_n \coloneqq |\chi_n\ra\la \chi_n| .
$$

\begin{figure}[t]\centering
\begin{tikzpicture}

\begin{scope}[yshift = 0cm]

\node(a) at (-5,0) {(a)};
\node (p) at (-2.5,0) {$\ket{\chi_n}=$}; 

\node (a1) at (0,0) {$i_1$};
\node (a2) at (0.4,0) {$i_2$};
\node (a3) at (1.6,0) {$i_2$};
\node (a4) at (2,0) {$i_3$};
\node (a5) at (3.2,0) {$i_3$};
\node (a6) at (3.6,0) {$i_4$};
\node (a7) at (4.8,0) {$i_n$};
\node (a8) at (5.2,0) {$i_1$};

\draw[thick] (a1) -- (0,0.5); 
\draw[thick] (a2) -- (0.4,0.5); 
\draw[thick] (a3) -- (1.6,0.5); 
\draw[thick] (a4) -- (2,0.5); 
\draw[thick] (a5) -- (3.2,0.5); 
\draw[thick] (a6) -- (3.6,0.5); 
\draw[thick] (a7) -- (4.8,0.5); 
\draw[thick] (a8) -- (5.2,0.5); 

\draw[dashed, thick] (-0.4,0.5) --(-0.2,0.5);
\draw[thick] 		(-0.2,0.5) --(0,0.5);
\draw[thick] 		(0.4,0.5) --(1.6,0.5);
\draw[thick] 		(2,0.5) --(3.2,0.5);
\draw[thick] 		(3.6,0.5) -- (3.9,0.5);
\draw[dotted,thick] 	(3.9,0.5) -- (4.5,0.5);
\draw[thick] 		(4.5,0.5) -- (4.8,0.5);
\draw[dashed,thick] 	(5.2,0.5) --(5.5,0.5);
\end{scope}

\begin{scope}[yshift = -1.5cm]

\node(b) at (-5,0) {(b)};
    \node (p) at (-2.5 ,-0.5) {$\mc{P}^{\otimes n} = $}; 
    
    \node[draw, shape=rectangle, fill=blue!20, thick] (b0) at (0,0) {$B$};
    \node[draw, shape=rectangle, fill=blue!20, thick] (b1) at (1,0) {$B$};
    \node[draw, shape=rectangle, fill=blue!20, thick] (b2) at (2,0) {$B$};
    \node[draw, shape=rectangle, fill=blue!20, thick] (b3) at (3.5,0) {$B$};
    \node[draw, shape=rectangle, fill=red!20, thick] (a0) at (-0.5,-1) {$A$};
    \node[draw, shape=rectangle, fill=red!20, thick] (a1) at (0.5,-1) {$A$};
    \node[draw, shape=rectangle, fill=red!20, thick] (a2) at (1.5,-1) {$A$};
    \node[draw, shape=rectangle, fill=red!20, thick] (a3) at (3,-1) {$A$}; 
    \draw [thick] (b0) -- (a0)
    (b1) -- (a1)
    (b2) -- (a2)
     (b3) -- (a3);
    \draw [dotted, thick] (2.5,0) -- ( 3.0, 0);
    \draw [dotted, thick] (2.0,-1) -- ( 2.5, -1);
    \draw [thick] (b0) -- (0,0.5) ; 
    \draw [thick] (b1) -- (1,0.5) ;
    \draw [thick] (b2) -- (2,0.5) ;
    \draw [thick] (b3) -- (3.5,0.5) ;
    \draw [thick] (b0) -- (0,-0.5) ; 
    \draw [thick] (b1) -- (1,-0.5) ;
    \draw [thick] (b2) -- (2,-0.5) ;
    \draw [thick] (b3) -- (3.5,-0.5) ;
    \draw [thick] (a0) -- (-0.5,-1.5) ; 
    \draw [thick] (a1) -- (0.5,-1.5) ;
    \draw [thick] (a2) -- (1.5,-1.5) ;
    \draw [thick] (a3) -- (3,-1.5) ;
    \draw [thick] (a0) -- (-0.5,-0.5) ; 
    \draw [thick] (a1) -- (0.5,-0.5) ;
    \draw [thick] (a2) -- (1.5,-0.5) ;
    \draw [thick] (a3) -- (3,-0.5) ;
    \end{scope}

\begin{scope}[yshift = -4cm]

\node(c) at (-5,0) {(c)};
    \node (p) at (-2.5,-0.5) {$\mc{P}^{\otimes n}(\chi_n) =$}; 
    
    \draw [thick] (-0.15,0.25) -- (-0.15,0.5) ;     
    \draw [thick] (0.85,0.25) -- (0.85,0.5) ;
    \draw [thick] (1.85,0.25) -- (1.85,0.5) ;
    \draw [thick] (3.35,0.25) -- (3.35,0.5) ;
    \draw [thick] (0.15,0.25) -- (0.15,0.5) ;     
    \draw [thick] (1.15,0.25) -- (1.15,0.5) ;
    \draw [thick] (2.15,0.25) -- (2.15,0.5) ;
    \draw [thick] (3.65,0.25) -- (3.65,0.5) ;
  
    \draw [thick] (-0.15,-0.25) -- (-0.15,-0.5) ;     
    \draw [thick] (0.85,-0.25) -- (0.85,-0.5) ;
    \draw [thick] (1.85,-0.25) -- (1.85,-0.5) ;
    \draw [thick] (3.35,-0.25) -- (3.35,-0.5) ;
    \draw [thick] (0.15,-0.25) -- (0.15,-0.5) ;     
    \draw [thick] (1.15,-0.25) -- (1.15,-0.5) ;
    \draw [thick] (2.15,-0.25) -- (2.15,-0.5) ;
    \draw [thick] (3.65,-0.25) -- (3.65,-0.5) ;

     \draw [dashed, thick] (-0.25,0.5) -- (-0.55,0.5) ;  
     \draw [thick] (-0.25,0.5) -- (-0.15,0.5) ;  
  \draw [thick] (0.15,0.5) -- (0.85,0.5);     
    \draw [thick] (1.15,0.5) -- (1.85,0.5) ;
    \draw [thick] (2.15,0.5) -- (2.40,0.5) ;
\draw [dotted, thick] (2.40,0.5) -- (3.10,0.5) ;
\draw [thick] (3.10,0.5) -- (3.35,0.5) ;
    \draw [thick] (3.65,0.5) -- (3.75,0.5) ;
    \draw [dashed, thick] (3.75,0.5) -- (4.15,0.5) ;

     \draw [dashed, thick] (-0.25,-0.5) -- (-0.55,-0.5) ;  
     \draw [thick] (-0.25,-0.5) -- (-0.15,-0.5) ;  
  \draw [thick] (0.15,-0.5) -- (0.85,-0.5);     
    \draw [thick] (1.15,-0.5) -- (1.85,-0.5) ;
    \draw [thick] (2.15,-0.5) -- (2.40,-0.5) ;
    \draw [dotted, thick] (2.40,-0.5) -- (3.10,-0.5) ;
    \draw [thick] (3.10,-0.5) -- (3.35,-0.5) ;
    \draw [thick] (3.65,-0.5) -- (3.75,-0.5) ;
    \draw [dashed, thick] (3.75,-0.5) -- (4.15,-0.5) ;

    \node[draw, shape=rectangle, fill=blue!20, thick] (b0) at (0,0) {$B$};
    \node[draw, shape=rectangle, fill=blue!20, thick] (b1) at (1,0) {$B$};
    \node[draw, shape=rectangle, fill=blue!20, thick] (b2) at (2,0) {$B$};
    \node[draw, shape=rectangle, fill=blue!20, thick] (b3) at (3.5,0) {$B$};
    \node[draw, shape=rectangle, fill=red!20, thick] (a0) at (-0.75,-1.5) {$A$};
    \node[draw, shape=rectangle, fill=red!20, thick] (a1) at (0.25,-1.5) {$A$};
    \node[draw, shape=rectangle, fill=red!20, thick] (a2) at (1.25,-1.5) {$A$};
    \node[draw, shape=rectangle, fill=red!20, thick] (a3) at (2.75,-1.5) {$A$}; 
    
    \draw [thick] (b0) -- (a0)
    (b1) -- (a1)
    (b2) -- (a2)
     (b3) -- (a3);
    \draw [dotted, thick] (2.2,-0.5) -- ( 2.6, -0.5);

  \draw [thick] (a0) -- (-0.75,-2) ; 
    \draw [thick] (a1) -- (0.25,-2) ;
    \draw [thick] (a2) -- (1.25,-2) ;
    \draw [thick] (a3) -- (2.75,-2) ;
    \draw [thick] (a0) -- (-0.75,-1) ; 
    \draw [thick] (a1) -- (0.25,-1) ;
    \draw [thick] (a2) -- (1.25,-1) ;
    \draw [thick] (a3) -- (2.75,-1) ;
\end{scope}
\end{tikzpicture}

\caption{{\small Tensor network representations of (a) the MaMu state $\ket{\chi_n}$, 
(b) the $n$-fold tensor product of a linear map $\mc{P} $ decomposed as in \eqref{eq:decomp},
 and  (c) $\mc{P}^{\otimes n}$ applied to the projector on the MaMu state $\chi_n$. 
\textsc{tsp-mamu} asks whether $\mc{P}^{\otimes n}(\chi_n)$ is psd for all $n$.
}}
\label{fig:tensors}
\end{figure}
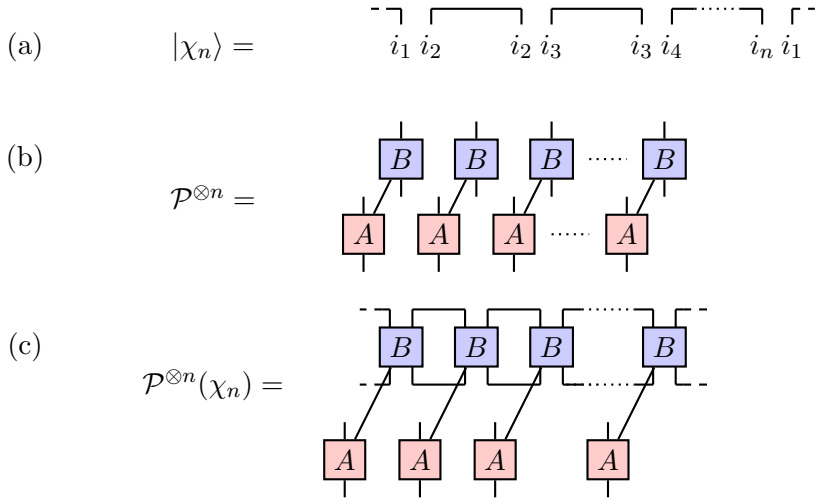

\begin{problem}[\textsc{tsp-mamu}] \label{prob:bell}
Given $d\in \mathbb{N}$ and a linear map  
$\mc{P}: \mc{M}_{d^2} \to\mc{M}_{d^2}$
whose Choi matrix has entries in $\Q$, 
is $\mc{P}^{\otimes n} (\chi_n)\geqslant 0 $ for all $n$? 
\end{problem}

\textsc{tsp-mamu} is asking whether $\mc{P}^{\otimes n}$ maps $\chi_n$ to a psd matrix for all $n$ (\cref{fig:tensors}). 
Note that if $\mc{P}$ is tsp the answer is yes, 
but if $\mc{P}$ is not tsp the answer could still be yes, because we are only `testing'  $\mc{P}^{\otimes n}$  on a specific psd matrix, so $\mc{P}^{\otimes n}$ could fail to be positive on another psd input---which is in fact the case (\cref{rem:failedred}).

The requirement that $C_\mc{P}$ have rational entries ensures that the input of the decision problem is finite. 
In fact, the following results also hold for integer entries,  
since $C_\mc{P}$ can be multiplied by a common multiple of the denominators. 
Since we will prove that \textsc{tsp-mamu} is undecidable, this will also hold for larger input sets, in particular for maps whose Choi matrix has complex entries.

\begin{theorem}[Undecidability of \textsc{tsp-mamu} -- Second main result]\label{thm:main}
\textsc{tsp-mamu} is undecidable, even if $d=3$. 
\end{theorem}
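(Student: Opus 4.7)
The plan is to reduce a known undecidable matrix-product problem to \textsc{tsp-mamu}. The appearance of the fixed dimension $d=3$ in the statement is strongly suggestive: mortality (and several related cyclic positivity/reachability questions) for $3\times 3$ integer matrices is undecidable, whereas the $2\times 2$ analogue is still open. Any such problem, which asks about the cyclic behaviour of finite products drawn from a fixed finite set of matrices, is a natural source, because the MaMu tensor is precisely the object whose contraction with $n$ local operators produces a cyclic trace of their product.

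The first step is to make that connection explicit. A direct calculation gives the identity
\[
 \langle \chi_n | N_1\otimes \cdots \otimes N_n | \chi_n\rangle = \trace\bigl(\tilde N_1\tilde N_2\cdots \tilde N_n\bigr),
\]
valid for any $N_l\in \mc{M}_{d^2}$, where $\tilde N_l$ denotes the reshaping that interchanges the two inner legs. Combining this with the Hilbert--Schmidt duality between $\mc{P}^{\otimes n}(\chi_n)$ and $\chi_n$, I would derive that, for any product test vector $|\phi\rangle = |\phi_1\rangle\otimes\cdots\otimes|\phi_n\rangle$,
\[
 \langle \phi|\mc{P}^{\otimes n}(\chi_n)|\phi\rangle = \trace\bigl(T_{\phi_1}T_{\phi_2}\cdots T_{\phi_n}\bigr),
\]
with each $T_{\phi_l}\in \mc{M}_{d^2}$ determined explicitly by $\mc{P}$ and $|\phi_l\rangle$. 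Thus positivity of $\mc{P}^{\otimes n}(\chi_n)$ against product inputs is governed by cyclic traces of products drawn from the parametric family $\{T_\phi : |\phi\rangle\in\C^{d^2}\}$.

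The second and main step is a gadget construction. Given an instance of the undecidable matrix problem, say matrices $M_1,\ldots,M_k\in \mc{M}_3(\mathbb{Z})$, I would engineer a linear map $\mc{P}:\mc{M}_9\to\mc{M}_9$ with rational Choi entries together with a finite ``menu'' of product test vectors $|\phi^{(1)}\rangle,\ldots,|\phi^{(k)}\rangle$ such that $T_{\phi^{(j)}}$ realises $M_j$ (possibly conjugated by start/stop markers that close the cyclic trace into an ordinary product), and such that any off-menu or entangled choice of $|\phi\rangle$ produces a contribution that is manifestly nonnegative. Under such a construction, a mortalizing product $M_{i_1}\cdots M_{i_n}=0$ becomes visible as a strictly negative value of $\langle\phi|\mc{P}^{\otimes n}(\chi_n)|\phi\rangle$, certifying $\mc{P}^{\otimes n}(\chi_n)\not\geqslant 0$, while the absence of any mortalizing product means all cyclic traces are nonnegative.

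The hardest part will be the converse direction, namely lifting product-vector positivity to full positivity of $\mc{P}^{\otimes n}(\chi_n)$, because the matrix-product structure I have exploited is inherently local while positive semidefiniteness is a global property on $\C^{9^n}$. The construction of $\mc{P}$ must therefore be sufficiently rigid; for instance, by arranging that $\mc{P}^{\otimes n}(\chi_n)$ is block-diagonal in a distinguished product basis, or that a dominant completely positive ``ballast'' term inside $\mc{P}$ absorbs all contributions away from the designed menu into manifest nonnegativity. Once this rigidity is in place, undecidability of the chosen $3\times 3$ matrix problem transfers to \textsc{tsp-mamu}, yielding the theorem precisely at the minimal working dimension $d=3$.
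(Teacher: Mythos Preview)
Your instinct about the cyclic trace structure of the MaMu tensor is right, and the identity
\[
 \langle \chi_n \mid N_1\otimes\cdots\otimes N_n \mid \chi_n\rangle \;=\; \trace\bigl(\tilde N_1\cdots \tilde N_n\bigr)
\]
is exactly what the paper exploits. But the source problem and the gadget are different, and the difference is what makes the paper's argument go through while yours, as written, has a genuine gap.

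The paper does not reduce from matrix mortality. It reduces from \textsc{positive-mpo}: given a tensor $C=(C_i^{\alpha,\beta})$, decide whether the diagonal operator $\tau_n(C)=\sum_{i_1,\ldots,i_n}\trace(C_{i_1}\cdots C_{i_n})\,|i_1\ldots i_n\rangle\langle i_1\ldots i_n|$ is psd for all $n$. This is already a ``for all $n$, all cyclic traces nonnegative'' question, so it lines up with \textsc{tsp-mamu} without any sign-flipping gadget. The reduction takes $A_i=|i\rangle\langle i|$ and $B_i$ a reshuffling of $C_i$, and then one gets the exact equality $\mc{P}^{\otimes n}(\chi_n)=\tau_n(C)$. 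Because the $A_i$ are diagonal rank-one projectors, $\mc{P}^{\otimes n}(\chi_n)$ is \emph{diagonal} in the product basis, so full positive semidefiniteness coincides with product-vector positivity. The ``hardest part'' you identify simply disappears; there is nothing to lift.

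Your proposal, by contrast, has two concrete problems. First, mortality detects a \emph{zero} product, and the trace of a zero product is zero, not negative; your sentence ``a mortalizing product $M_{i_1}\cdots M_{i_n}=0$ becomes visible as a strictly negative value'' is not justified, and the start/stop markers you allude to would have to do real work that you have not described. Second, the entangled-vector issue you flag is not a technicality: without a mechanism (like the paper's diagonal $A_i$) that forces $\mc{P}^{\otimes n}(\chi_n)$ into a block-diagonal form, there is no reason nonnegativity on your finite menu of product vectors should imply positive semidefiniteness, and a vague ballast term will not control the off-diagonal blocks generated by the MaMu contractions. The fix is to change the source problem to one that is already about nonnegativity of cyclic traces and to build $\mc{P}$ so that its output on $\chi_n$ is diagonal by construction.
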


To prove this result, 
we provide a reduction from a problem about Matrix Product Operators (MPO)  \cite{De15}, 
where given a tensor $C = (C^{\alpha,\beta}_i \in \Q) $, one considers the following object: 
\be\label{eq:taunC}
\tau_n(C) \coloneqq \sum_{i_1,\ldots,i_n} 
\tr(C_{i_1}C_{i_2}\cdots C_{i_n}) |i_1\ldots i_n\ra\la i_1 \ldots  i_n| .
\ee

\begin{problem}[\textsc{positive-mpo}]\label{pro:MPO}
Given $s,t\in \N$ and a tensor $C = (C^{\alpha,\beta}_i \in \Q) $ where 
$i \in \{1, \ldots, t\}$ and $ \alpha,\beta \in \{1, \ldots, s\}$,  
is $\tau_n(C)\geqslant 0$ for all $n$?
\end{problem}

\begin{theorem}[Undecidability of \textsc{positive-mpo} \cite{De15}]\label{thm:undec}
\textsc{positive-mpo} is undecidable, even if $s=t=7$. 
\end{theorem}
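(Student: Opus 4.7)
The plan is to reduce \textsc{positive-mpo} from a known undecidable matrix product problem, exploiting the key structural observation that $\tau_n(C)$ is diagonal in the computational basis. Indeed, since $|i_1\ldots i_n\rangle\langle i_1\ldots i_n|$ is a rank-one diagonal projector, the matrix in \eqref{eq:taunC} is diagonal with entries $\tr(C_{i_1}\cdots C_{i_n})$. Consequently, $\tau_n(C) \succcurlyeq 0$ for all $n$ if and only if
\[
\tr(C_{i_1}C_{i_2}\cdots C_{i_n}) \geq 0 \quad \textrm{for all } n\in\N \textrm{ and all } i_1,\ldots,i_n \in\{1,\ldots,t\}.
\]
Thus \textsc{positive-mpo} is equivalent to the \emph{trace positivity} problem for the finitely generated matrix semigroup $\langle C_1,\ldots,C_t\rangle$. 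This is the right framing, since non-positivity problems for finitely generated matrix semigroups (matrix mortality, trace/entry reachability, freeness) form a classical family of undecidable problems.

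First I would fix the target: the Post Correspondence Problem (PCP) for a fixed small number of pairs (currently known undecidable with $7$ pairs). Given a PCP instance $(u_1,v_1),\ldots,(u_7,v_7)$ over the binary alphabet, the goal is to build, in a uniform and computable way, rational $s\times s$ matrices $C_1,\ldots,C_7$ (with $s=7$) so that some product has negative trace if and only if the PCP instance admits a matching sequence. The standard ingredient is the Claus-style encoding that turns binary words into $2\times 2$ integer matrices under concatenation: the map $w\mapsto M_w$ via digit-wise base-$b$ matrices is a semigroup homomorphism, so $u_{i_1}\cdots u_{i_n} = v_{i_1}\cdots v_{i_n}$ can be detected by comparing $M_{u_{i_1}}\cdots M_{u_{i_n}}$ with $M_{v_{i_1}}\cdots M_{v_{i_n}}$. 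I would bundle the "top" and "bottom" encodings in a block-diagonal matrix and add additional control blocks that (i) force every valid product to correspond to a syntactically well-formed PCP word, (ii) insert a fixed "start" and "end" block that detects the matching condition, and (iii) arrange the overall structure so that the trace evaluates to a manifestly nonnegative quantity on all non-matching sequences and to a strictly negative value exactly on matching ones.

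The second step is to verify the two directions of the reduction. If the PCP instance has no solution, every product $C_{i_1}\cdots C_{i_n}$ has trace $\geq 0$, hence $\tau_n(C)\succcurlyeq 0$ for all $n$. Conversely, if the PCP instance has a solution $(i_1,\ldots,i_n)$, the corresponding trace is negative, so $\tau_n(C)$ fails to be psd. Combined with the undecidability of $7$-pair PCP, this yields undecidability of \textsc{positive-mpo} with $s=t=7$. Rationality (in fact integrality) of the $C_i$ is automatic from the encoding, so the input is finite as required.

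The hard part will be the explicit design of the blocks that achieve the sign discrimination while respecting the bound $s=t=7$: one must ensure that unrelated products (e.g.\ sequences that do not begin with a "start" symbol, or that mix the $u$- and $v$-encodings incorrectly) contribute nonnegative traces, so that the negativity is uniquely a certificate of a PCP match. Getting the control/synchronisation blocks to cost no extra dimensions beyond $s=7$---and to respect the tight $t=7$ generator bound---is the genuinely delicate combinatorial engineering. Any looser dimension bound is easy; the stated bound forces a highly economical encoding, which is what the reference \cite{De15} contributes.
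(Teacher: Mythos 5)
First, a point of context: the paper does not prove this theorem at all --- it imports it verbatim from \cite{De15} and uses it as a black box in the reduction establishing \cref{thm:main} (where, incidentally, it is invoked with $s=t=9$ rather than $7$, implicitly using that the bound can be padded). So there is no in-paper proof to compare against; the comparison has to be with the proof in \cite{De15}. Your opening reduction is correct and matches that proof's starting point: $\tau_n(C)$ is diagonal, so \textsc{positive-mpo} is exactly the question of whether every product $C_{i_1}\cdots C_{i_n}$ has nonnegative trace, and the undecidability in \cite{De15} is indeed obtained by encoding the Post Correspondence Problem (with the bound $7$ coming from the undecidability of PCP with $7$ pairs) into such trace conditions via word-to-matrix homomorphisms.

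The gap is that everything after that is a plan rather than a proof: the assertion that one can ``arrange the overall structure so that the trace evaluates to a manifestly nonnegative quantity on all non-matching sequences and to a strictly negative value exactly on matching ones'' \emph{is} the theorem, and you give no construction and no verification. You yourself flag this as ``the hard part,'' and it genuinely is, for reasons your sketch glosses over. (i) The trace is invariant under cyclic permutation of the factors, so ``start'' and ``end'' blocks cannot be positioned syntactically; one must control arbitrary products, including those in which the start symbol occurs several times or not at all. (ii) Detecting equality of the two encoded products $M_{u_{i_1}}\cdots M_{u_{i_n}}$ and $M_{v_{i_1}}\cdots M_{v_{i_n}}$ by a \emph{sign} of a trace requires something like a squaring gadget (so that the discrepancy contributes $(\cdot)^2\geq 1$ when nonzero), and such gadgets multiply dimensions; showing this fits into $s=7$ with exactly $t=7$ generators, while simultaneously neutralising all ill-formed products, is the actual content of \cite{De15}. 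Without exhibiting the matrices $C_1,\dots,C_7$ and verifying both directions on \emph{all} index sequences, the argument is not complete. As a citation-level justification of why the theorem is plausible your proposal is fine; as a proof it is missing its central construction.
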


\begin{proof}[Proof of \cref{thm:main}]
We  provide a computable reduction from \textsc{positive-mpo} with $s,t = 9$ to  \textsc{tsp-mamu}. 
If there would exist an algorithm to solve \textsc{tsp-mamu}, one could use it to decide \textsc{positive-mpo}  (via this reduction), but that contradicts \cref{thm:undec}, so an algorithm for \textsc{tsp-mamu} cannot exist. 

Consider an instance of \textsc{positive-mpo} given by the tensor $C=(C_i^{\alpha, \beta})$ with $(s,t)=(9,9)$. 
We want to show that this is a yes-instance iff its image (under the reduction) is a yes-instance of \textsc{tsp-mamu}. 
So, from $C$, we construct a map $\mc{P}$  as follows (\cref{fig:pictureproof}). 

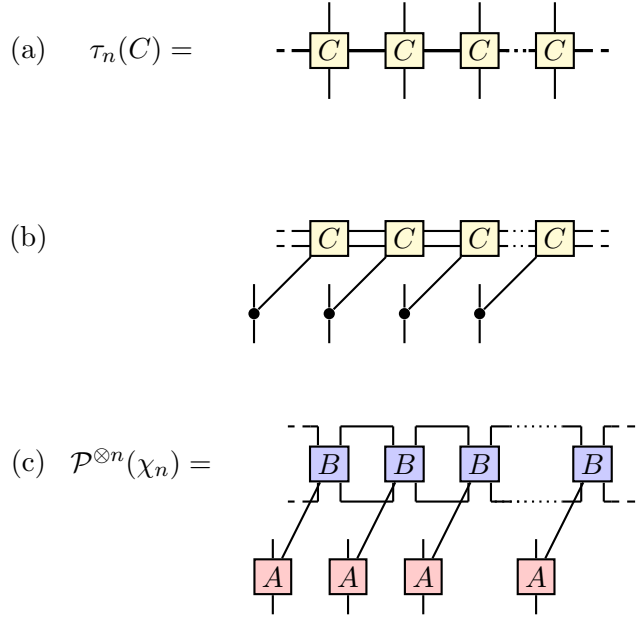
\begin{figure}[t]\centering
\begin{tikzpicture}[inner sep=1mm]
\begin{scope}[xshift = -0.0cm, yshift = -0cm]

\node (a) at (-3,0) {(a)};
\node (p) at (-1.5,0) {$\tau_n(C)=$}; 
   \foreach \i/\itext in {1/1,2/2,3/3,4/n} {
       \node[draw,shape=rectangle, fill=yellow!20, thick] (\i) at (\i, 0) {$C$};
 \node (\i spin) at (\i, -0.75){} ;
        \draw[-,thick] (\i) -- (\i spin);
\node (\i spin) at (\i, 0.75) {};
        \draw[-,thick] (\i) -- (\i spin);
    };
    
    \foreach \i in {1,2} {
        \pgfmathtruncatemacro{\iplusone}{\i + 1};
        \draw[-,very thick] (\i) -- (\iplusone);
    };

\draw[dashed, very thick] (0.3,0) -- (0.5,0);
\draw[-, very thick] (0.5,0) -- (0.75,0);
\draw[-, very thick] (1.25,0) -- (1.75,0);
\draw[-, very thick] (2.25,0) -- (2.75,0);
\draw [very thick]( 3.25,0) -- (3.35,0);
\draw[dotted, very thick] (3.35,0) -- (3.65,0);
\draw [very thick ](3.65,0) -- (3.75,0);
\draw[dotted, very thick] (3.35,-0) -- (3.65,-0);
\draw[-, very thick] (4.25,-0) -- (4.4,-0);
\draw[dashed, very thick]  (4.4,0) -- (4.8,0);

\end{scope}

\begin{scope}[xshift = -0.0cm, yshift = -2.5cm]
\node (b) at (-3,0) {(b)};
   \foreach \i/\itext in {1/1,2/2,3/3,4/N} {
       \node[draw,shape=rectangle, fill=yellow!20, thick] (\i) at (\i, 0) {$C$};
    };

\draw[dashed, thick] (0.3,-0.1) -- (0.5,-0.1);
\draw[dashed, thick] (0.3,0.1) -- (0.5,0.1);
\draw[-, thick] (0.5,-0.1) -- (0.75,-0.1);
\draw[-, thick] (0.5,0.1) -- (0.75,0.1);
\draw[-, thick] (1.25,0.1) -- (1.75,0.1);
\draw[-, thick] (1.25,-0.1) -- (1.75,-0.1);
\draw[-, thick] (2.25,-0.1) -- (2.75,-0.1);
\draw[-, thick] (2.25,0.1) -- (2.75,0.1);
\draw[thick] (3.25,0.1) -- (3.35,0.1);
\draw[dotted, thick] (3.35,0.1) -- (3.65,0.1);
\draw[thick] (3.65,0.1) -- (3.75,0.1);
\draw[thick] (3.25,-0.1) -- (3.35,-0.1);
\draw[dotted, thick] (3.35,-0.1) -- (3.65,-0.1);
\draw[thick] (3.65,-0.1) -- (3.75,-0.1);
\draw[-, thick] (4.25,-0.1) -- (4.4,-0.1);
\draw[-, thick] (4.250,0.1) -- (4.4,0.1);
\draw[dashed, thick]  (4.4,-0.1) -- (4.8,-0.1);
\draw[dashed, thick] (4.4,0.1) -- (4.8,0.1);

   \foreach \a/\atext/\ai in {0/1/1,1/2/2,2/3/3,3/n/4} {
	\node[circle,fill,inner sep=1.5pt] (\a dot) at (\a,-1) {};
   \node (\a spin) at (\a, -1.5) {}; 
    \draw[-, thick] (\a dot) -- (\a spin);
  \node (\a spin) at (\a, -0.5) {} ;
    \draw[-, thick] (\a dot) -- (\a spin);
    \draw[-, thick] (\a dot) -- (\ai);
    };

\end{scope}
\begin{scope}[shift = {(0,-5.5)}]
\node (c) at (-3,0) {(c)};
\end{scope}

\begin{scope}[shift = {(1,-5.5)}]
   \node (p) at (-2.5,0) {$\mc{P}^{\otimes n}(\chi_n) =$}; 
    \node[draw, shape=rectangle, fill=blue!20, thick] (b0) at (0,0) {$B$};
    \node[draw, shape=rectangle, fill=blue!20, thick] (b1) at (1,0) {$B$};
    \node[draw, shape=rectangle, fill=blue!20, thick] (b2) at (2,0) {$B$};
    \node[draw, shape=rectangle, fill=blue!20, thick] (b3) at (3.5,0) {$B$};
    \node[draw, shape=rectangle, fill=red!20, thick] (a0) at (-0.75,-1.5) {$A$};
    \node[draw, shape=rectangle, fill=red!20, thick] (a1) at (0.25,-1.5) {$A$};
    \node[draw, shape=rectangle, fill=red!20, thick] (a2) at (1.25,-1.5) {$A$};
    \node[draw, shape=rectangle, fill=red!20, thick] (a3) at (2.75,-1.5) {$A$}; 
    \draw [thick] (b0) -- (a0)
    (b1) -- (a1)
    (b2) -- (a2)
     (b3) -- (a3);
    \draw [dotted, thick] (2.2,-0.5) -- ( 2.6, -0.5);
   
    \draw [thick] (-0.15,0.25) -- (-0.15,0.5) ;     
    \draw [thick] (0.85,0.25) -- (0.85,0.5) ;
    \draw [thick] (1.85,0.25) -- (1.85,0.5) ;
    \draw [thick] (3.35,0.25) -- (3.35,0.5) ;
    \draw [thick] (0.15,0.25) -- (0.15,0.5) ;    
    \draw [thick] (1.15,0.25) -- (1.15,0.5) ;
    \draw [thick] (2.15,0.25) -- (2.15,0.5) ;
    \draw [thick] (3.65,0.25) -- (3.65,0.5) ;
  
    \draw [thick] (-0.15,-0.25) -- (-0.15,-0.5) ;     
    \draw [thick] (0.85,-0.25) -- (0.85,-0.5) ;
    \draw [thick] (1.85,-0.25) -- (1.85,-0.5) ;
    \draw [thick] (3.35,-0.25) -- (3.35,-0.5) ;
    \draw [thick] (0.15,-0.25) -- (0.15,-0.5) ;     
    \draw [thick] (1.15,-0.25) -- (1.15,-0.5) ;
    \draw [thick] (2.15,-0.25) -- (2.15,-0.5) ;
    \draw [thick] (3.65,-0.25) -- (3.65,-0.5) ;
  
     \draw [dashed, thick] (-0.25,0.5) -- (-0.55,0.5) ;  
     \draw [thick] (-0.25,0.5) -- (-0.15,0.5) ;  
  \draw [thick] (0.15,0.5) -- (0.85,0.5);     
    \draw [thick] (1.15,0.5) -- (1.85,0.5) ;
    \draw [thick] (2.15,0.5) -- (2.40,0.5) ;
\draw [dotted, thick] (2.40,0.5) -- (3.10,0.5) ;
\draw [thick] (3.10,0.5) -- (3.35,0.5) ;
    \draw [thick] (3.65,0.5) -- (3.75,0.5) ;
    \draw [dashed, thick] (3.75,0.5) -- (4.15,0.5) ;

     \draw [dashed, thick] (-0.25,-0.5) -- (-0.55,-0.5) ;  
     \draw [thick] (-0.25,-0.5) -- (-0.15,-0.5) ;  
  \draw [thick] (0.15,-0.5) -- (0.85,-0.5);     
    \draw [thick] (1.15,-0.5) -- (1.85,-0.5) ;
    \draw [thick] (2.15,-0.5) -- (2.40,-0.5) ;
    \draw [dotted, thick] (2.40,-0.5) -- (3.10,-0.5) ;
    \draw [thick] (3.10,-0.5) -- (3.35,-0.5) ;
    \draw [thick] (3.65,-0.5) -- (3.75,-0.5) ;
    \draw [dashed, thick] (3.75,-0.5) -- (4.15,-0.5) ;

  \draw [thick] (a0) -- (-0.75,-2) ; 
    \draw [thick] (a1) -- (0.25,-2) ;
    \draw [thick] (a2) -- (1.25,-2) ;
    \draw [thick] (a3) -- (2.75,-2) ;
    \draw [thick] (a0) -- (-0.75,-1) ; 
    \draw [thick] (a1) -- (0.25,-1) ;
    \draw [thick] (a2) -- (1.25,-1) ;
    \draw [thick] (a3) -- (2.75,-1) ;
\end{scope}
\end{tikzpicture}

\caption{{\small The reduction from  \textsc{positive-mpo} to   \textsc{tsp-mamu} in tensor network diagrams.  
(a) $\tau_n(C)$, given by Equation \eqref{eq:taunC}, is the central object in \textsc{positive-mpo}.
(b) We `split' the fat horizontal index of $C$ into two indices (Equation \eqref{eq:fat}), 
and express the vertical indices with a delta function (indicated with a dot).  
(c) We identify $A$ with the delta function, 
 and  $B$ with a reshuffled $C$ as given in Equation \eqref{eq:Cgamma}. 
 The diagram of (c) equals the diagram of (b), which equals the diagram of (a), as given in Equation \eqref{eq:final}.
}}
\label{fig:pictureproof}
\end{figure}

We choose $d=3$, $r = t$ and 
$$
A_i =\ket{i}\bra{i}. 
$$
Since $\alpha,\beta$ run to $d^2=s$, we can express each as a multiindex, 
\be \label{eq:fat}
\alpha=(\mu,\nu), \quad \beta=(\lambda,\rho), 
\ee
 where $\mu,\nu,\lambda,\rho=1,\ldots, d$, and we  define the tensor $B$ as 
\be
B_i^{(\mu,\lambda),(\nu,\rho)} = C_i^{(\mu,\nu),(\lambda,\rho)}. 
\label{eq:Cgamma}
\ee
It is now immediate to see that 
\begin{align*}
\bra{\chi_n}  (B_{i_1} \otimes B_{i_2}  \otimes \ldots \otimes B_{i_n} ) \ket{\chi_n} 
&=
 \tr(C_{i_1} C_{i_2} \cdots C_{i_n}). 
\end{align*}
Since the Choi matrix of the $n$-fold tensor product of $\mc{P}$ is given by 
$$
C_{\Pp}^{\otimes n}  = \sum_{i_1,\ldots, i_n=1}^r (A_{i_1} \otimes  \cdots \otimes A_{i_n} )\otimes 
(B_{i_1}   \otimes \cdots \otimes B_{i_n} ),  
$$
we obtain that 
$$
\tau_n(C) = (\mathbbm{1}\otimes  \bra{\chi_n}) C_{\Pp}^{\otimes n} (\mathbbm{1}\otimes \ket{\chi_n} ) .  
$$
In other words (or in other symbols): 
\be \label{eq:final}
\tau_n(C) = \mc{P}^{\otimes n}(\chi_n). 
\ee
Since they are the same,   
the left hand side is psd iff the right hand side is psd (for every $n$), 
which proves the reduction from \textsc{positive-mpo} to \textsc{tsp-mamu}. 
\end{proof}

(Un)fortunately, this does not immediately imply undecidability of tsp: 

\begin{problem}[\textsc{tsp}]\label{prob:tsp}
Given $d\in \mathbb{N}$ and a linear map  $\mc{P}: \mc{M}_d  \to\mc{M}_d  $ whose Choi matrix has entries in $\Q$, is $\mc{P}^{\otimes n}$ positive for all $n$? 
\end{problem}

\begin{remark}[\textsc{tsp-mamu} cannot be reduced to \textsc{tsp} in the obvious way]\label{rem:failedred}
`The obvious way' is the identity map---we show that the identity map from \textsc{tsp-mamu} to \textsc{tsp} is not a reduction. There could exist another reduction, though. 

A reduction maps yes-instances to yes-instances, and no-instances to no-instances. 
The following map  $\mc{P}$ is a yes-instance of \textsc{tsp-mamu} and a no-instance of \textsc{tsp}.
In decomposition of  Equation \eqref{eq:decomp}, it is given by  $r=1$ and 
$$
A = \mathbbm{1}_{d^2}, \quad B = \textrm{diag}(-1,0,\ldots,0,2) \in \mc{M}_{d^2}. 
$$ 
$\mc{P}$ is  not a positive map, because  
$$
P(|1\ra\la 1|) = -\mathbbm{1}.
$$ 
Yet, 
$$
\Pp^{\otimes n}(\chi_n) = 
\mathbbm{1}^{\otimes n}  \tr(B^n) = 
\mathbbm{1}^{\otimes n} [(-1)^n + 2^n] 
$$ 
which is psd for all $n$. 
\demo \end{remark}

A problem\footnote{More precisely, the set of yes-instances of this problem, which defines a formal language.} is recursively enumerable (\textsf{r.e.})\ if it is recognised by a Turing machine, and 
co-recursively enumerable (\textsf{co-r.e.}) if its complement is \textsf{r.e.} \cite{Ko97}. 
A problem is semidecidable  if it is  \textsf{r.e.}\ or \textsf{co-r.e.}, 
and decidable  if it is \textsf{r.e.}\ and \textsf{co-r.e.} (that is, there is a Turing machine that accepts all yes-instances and rejects all no-instances). 

\begin{remark}[Semidecidability of \textsc{TSP}] 
\textsc{tsp} is  \textsf{co-r.e.}, 
because the no-instances can be recognised by a Turing machine (and we conjecture that the yes-instances cannot, cf.\ \cref{con:tsp}). 
Starting from $n=1$ and increasing in $n$, this Turing machine checks whether $\Pp^{\otimes n}$  is a positive map. 
Checking positivity of the map is computable, because of the quantifier elimination theorem. 
If a map $\Pp$ is not tsp, then there is an $n \in \N$ such that $\Pp^{\otimes n}$ is not positive, so the algorithm will find it in finite time and reject the instance. 
If a map  $\Pp$  is tsp, this algorithm will not halt. 
\demo \end{remark}

\begin{conjecture}[\textsc{tsp} is undecidable]\label{con:tsp}
\textsc{tsp} is not \textsf{r.e.}
\end{conjecture}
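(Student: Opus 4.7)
Since \textsc{tsp} is known to be \textsf{co-r.e.}, proving the conjecture is equivalent to showing that \textsc{tsp} is undecidable, which in turn amounts to exhibiting a computable reduction from a $\Sigma_1^0$-complete problem to the complement of \textsc{tsp}. The plan is therefore to reduce the complement of \textsc{tsp-mamu} (shown $\Sigma_1^0$-complete by \cref{thm:main}) to the complement of \textsc{tsp}: given an instance map $\mc P$ of \textsc{tsp-mamu}, produce an augmented map $\widetilde{\mc P}$ such that $\widetilde{\mc P}$ is tsp if and only if $\mc P^{\otimes n}(\chi_n)\geqslant 0$ for every $n$.

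The central obstacle is exactly the one flagged in \cref{rem:failedred}: positivity of $\mc P^{\otimes n}$ at the single projector $\chi_n$ is strictly weaker than full positivity of $\mc P^{\otimes n}$, so the identity reduction cannot work. Any successful construction of $\widetilde{\mc P}$ must restrict the set of possible positivity witnesses of $\widetilde{\mc P}^{\otimes n}$ to the MaMu sector. A natural first attempt uses the star convexity of $\textrm{TSP}$ with respect to entanglement-breaking maps (\cref{pro:convexity}\ref{pro:v}): set $\widetilde{\mc P}=\mc P+\lambda\mc Q$ with rational $\lambda>0$ and $\mc Q$ entanglement breaking, chosen so that $\lambda\mc Q^{\otimes n}$ dominates the potential non-positivity of $\mc P^{\otimes n}$ along every direction transversal to $\ket{\chi_n}$. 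A more structural alternative is to precompose $\mc P$ with a local completely positive ``MaMu filter'' $\mc E$ so that $\mc E^{\otimes n}$ sends every psd input into the span of MaMu-like states; then positivity of $\widetilde{\mc P}^{\otimes n}=(\mc P\circ\mc E)^{\otimes n}$ would decouple into positivity of $\mc P^{\otimes n}$ on this restricted subspace, which is precisely the condition tested by \textsc{tsp-mamu}.

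The hard part is the \emph{uniform in $n$} nature of the equivalence. The MaMu test controls only the trace-type quantities $\tr(B_{i_1}\cdots B_{i_n})$, while positivity of $\widetilde{\mc P}^{\otimes n}$ constrains the full spectrum on $\C^{D^n}$. Any padding or filtering term scales with $n$ and must be balanced against the exponential sensitivity of the spectrum, so a single choice of $\lambda$ or $\mc E$ that works for all $n$ simultaneously seems to require delicate engineering, and may in fact be impossible with purely local completely positive modifications. A possible handle is the specific structure of the map built in the proof of \cref{thm:main}, namely $A_i=\ket{i}\bra{i}$ and $B_i$ obtained by reshuffling the MPO tensor $C$: the diagonal form of the $A_i$ in the first register gives an explicit description of which pure states can witness non-positivity at level $n$, and it is this description that I expect to be the key to closing the gap between \textsc{tsp-mamu} and \textsc{tsp}. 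If this approach fails, a backup route is to restart the reduction from a strengthened variant of \textsc{positive-mpo} in which one tests $\sum_{i_1,\dots,i_n}\bra{L}M_{i_1}\cdots M_{i_n}\ket{R}\ket{i_1\cdots i_n}\bra{i_1\cdots i_n}$ against all rational boundary data $\ket{L},\ket{R}$ and then run the proof of \cref{thm:main} with the corresponding MaMu-with-boundary tensor in place of $\ket{\chi_n}$, which would match the full positivity requirement of \textsc{tsp} directly.
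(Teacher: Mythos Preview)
The statement you are attempting to prove is \cref{con:tsp}, which is explicitly labelled a \emph{conjecture} in the paper; the paper provides no proof, only the remark that its validity would imply the existence of essential tsp maps and NPT bound entanglement. There is therefore nothing in the paper to compare your attempt against.

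More importantly, what you have written is not a proof but a research outline, and you say so yourself: you describe ``a natural first attempt'', ``a more structural alternative'', concede that the uniform-in-$n$ balancing ``may in fact be impossible with purely local completely positive modifications'', and close with a ``backup route'' based on a hypothetical strengthened variant of \textsc{positive-mpo}. None of the proposed constructions is carried out, and for each you identify but do not resolve the central obstruction---precisely the one the paper flags in \cref{rem:failedred}: positivity of $\mc P^{\otimes n}$ on the single input $\chi_n$ is strictly weaker than positivity on all psd inputs, and no mechanism in your sketch actually forces the witnesses of non-positivity of $\widetilde{\mc P}^{\otimes n}$ to lie in the MaMu sector for every $n$ simultaneously. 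The star-convexity idea in particular cannot work as stated: adding $\lambda\mc Q$ with $\mc Q$ entanglement breaking can only make the map \emph{more} positive, so it cannot create an equivalence with a yes/no condition on $\chi_n$; and the ``MaMu filter'' $\mc E$ would need $\mc E^{\otimes n}$ to have range inside the span of $\chi_n$, which is impossible for a nonzero local cp map since $\chi_n$ has rank one while $\mc E^{\otimes n}$ applied to a full-rank state has rank at least $(\mathrm{rank}\,\mc E(\rho))^n$. As it stands, the conjecture remains open and your proposal does not close the gap.
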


If \textsc{tsp} were undecidable, essential tsp maps (over the complex) would exist. 
This is so because checking whether a map is in CP or coCP is decidable, so  
if all tsp maps were trivial, an algorithm to decide \textsc{tsp} would exist. 
More precisely, the undecidability of \textsc{tsp} would entail 
\begin{enumerate} 
\item the existence of  essential  tsp maps, 
\item the existence of NPT bound entangled states, and
\item disprove the PPT squared conjecture \cite{Mu18}. 
\end{enumerate}
Yet, these implications are non-constructive, meaning that even if we know that essential tsp maps exist,  
we may not be able to construct one. 
From a broader perspective, undecidability would be a means to proving the existence of essential tsp maps, that is, it would be a proof technique, and not an end in itself. This is already the case for  the undecidability of \textsc{positive-mpo} (\cref{thm:undec}), which is a proof technique to conclude that certain purifications cannot exist \cite{De15}. 

\section{Conclusion and outlook}\label{sec:concl}

In this paper, we have approached the existence of essential tsp maps \cite{Mu16} from two angles. 
First, we showed that essential tsp maps exist on the hypercomplex field (\cref{thm:gen}) and on $\ell_{\C}^2$ (\cref{thm:l2}), 
and that bound entangled hyperquantum states with a negative partial transpose exist (\cref{cor:halos}). 
Second, we proved the undecidability of the tensor stable positivity problem on MaMu tensors (\cref{thm:main}).

One question overlooking this work is whether  tensor stable positivity is undecidable (\cref{con:tsp}), which is part of a bigger trend of exploring the scope of undecidability in physics (see also \cite{De20d}). 
Often, when a problem is undecidable, a bounded version thereof is \textsf{NP}-complete---this is the case for the (bounded) halting problem, the (bounded) post correspondence problem, the (bounded) tiling problem, the (bounded) matrix mortality problem, and the (bounded) \textsc{positive-mpo} (\cref{pro:MPO}) \cite{Kl14}, to cite a few. 
A bounded version of \textsc{tsp} could be \textsf{NP}-complete. We are currently investigating this direction.

 How valuable is it to prove that essential tsp maps exist \emph{on the hypercomplex}?
There are many investigations regarding the `border' of quantum mechanics. 
For example, generalised probabilistic theories try to single out quantum mechanics from a more general set of theories.
Similarly,  reconstructions of quantum mechanics aim at providing physically motivated postulates for quantum mechanics \cite{Ch19b}. 
The hypercomplex are not part of any `orthodox' formulation of quantum mechanics (as far as we know), but this paper shows that some long-standing problems (like the existence of NPT bound entanglement) become solvable there. 
How reasonable is it to assume that our physical reality is in some way described by hypercomplex numbers? 
Clearly, not very reasonable at all, but neither is the assumption that our reality is described by objects requiring an infinite description, such as the reals or complex (see the recent works \cite{Gi18,De19e,Gi19}). 
On the other hand, recent work highlights the need of complex numbers in quantum theory \cite{Re21} (or more precisely, the need for numbers with a real and an imaginary part), and when complex numbers were invented, who would have thought that the square root of $-1$ would be of any use, let alone be necessary, 
for the formulation of a fundamental theory of our world, namely quantum mechanics?

A downside of our result on essential tsp maps on $\mathcal{M}_d(\Hc)$ (\cref{thm:gen}) is that 
$\mathcal{M}_d(\Hc)$ is not a Hilbert space \footnote{The notion of a Hilbert space is only defined over the real or complex numbers. One could relax this condition and try to define a Hilbert space over $\Hc$, but would again run into the problem that over $\Hc$ only constant sequences converge. This problem arises when imposing completeness (with respect to the norm induced by the inner product), which is one of the properties of a Hilbert space. }. 
For this reason we attempted to reformulate our result in $\ell_\C^2$ (Appendix \ref{app:l2}),
 but the notion of positivity there clashes with the existence of an inner product (\cref{pro:inner}), so  the resulting space is not a Hilbert space either. 
 So \cref{thm:gen} is not only challenging because it uses an unorthodox field, namely $\Hc$, 
 but also because the space where these positive maps live is not a Hilbert space---so both aspects challenge the standard formulation of quantum mechanics.

How valuable is it to prove that a `physical' problem (like tsp on MaMu) is undecidable? 
If one disagrees with the use of infinities in physics, then all problems become decidable, because undecidability requires an infinite number of instances.\footnote{This is ultimately due to the fundamental distinction between finite and infinite made in computer science and formal systems, which seems irrelevant for physical quantities, since   any number larger than, say, a googol, $10^{100}$,   is `practically' infinite.} 
Yet, the undecidability of tensor stable positivity would be a non-constructive  \emph{proof technique},  as emphasized at the end of \cref{sec:undec}. 
In this respect, proving undecidability would be useful even if one distrusts objects involving infinities---however, the very definition of tsp involves an infinity (namely for all $n$), so in this case one would disregard the entire question and work. 

\bigskip

\emph{Acknowledgements}.---We thank Alexander M\"uller-Hermes for pleasant discussions and sharing with us ideas about tensor stable positivity. 
We also thank everyone in the group (Andreas Klingler, Joshua Graf, Tobias Reinhart and Sebastian Stengele) for many discussions. 
MVDE acknowledges support of the Stand Alone Project P33122-N of the Austrian Science Fund (FWF). 

\bigskip 

\renewcommand{\thesubsection}{\Alph{section}.\arabic{subsection}}
\setcounter{section}{0}

\begin{appendices}

\section{The hyperreals}
\label{app:hyper}

Here we construct the hyperreals via the ultrapower construction and give an example of an infinitesimal element in the field. This material is based on \cite{Go98}. 

Consider the set $\R^\N$ of all sequences of real numbers. 
An element in this set is of the form $r = (r_1, r_2, r_3, \ldots )$, which we denote by $(r_n)$. 
Defining addition and multiplication entrywise, 
\begin{eqnarray}
\nonumber r + s &=  (r_n + s_n : n\in \N) \\
\nonumber r \cdot s &=  ( r_n \cdot s_n: n \in \N), 
\end{eqnarray}
we obtain that the set  $\R^\N$ is a commutative ring. 
The real numbers can be included in $\R^\N$ by assigning to $a\in \R$ the element 
$ (a,a,a,\ldots)$.   
The zero element of the set is then $ (0,0,0,\ldots)$ and the unity $(1,1,1,\ldots)$. 
Finally, the additive inverse is given by $-r = (-r_n)$. 
Yet, $(\R^\N, +,\cdot)$ is  not a field, because there exist non-zero zero divisors such as 
$$
(1,0,1,0,1,0,\ldots) \cdot (0,1,0,1,0,1,\ldots) = (0,0,0,\ldots).
$$
To `fix' this, and construct a field $\Hr$ out of this ring, one of the previous elements needs to be   0 in $\Hr$. 
This is formalized by means of an ultrafilter $\mc{F}$: 
Two sequences are \emph{equivalent} if the indices for which they are equal form a `large' subset of $\N$, that is, these indices are in $\mc{F}$.

\begin{definition}[Ultrafilter]\label{def:ultrafilter}
An \emph{ultrafilter} $\mc{F}$ on $\N$ is a set of subsets of $\N$ with the following properties: 
\begin{enumerate}
    \item It is closed under taking supersets: if $X\in \mc{F}$ and $X\subseteq Y \subseteq \N$, then $Y \in \mc{F}$. 
    \item It is closed under intersections: if $X,Y\in \mc{F}$, then $X\cap Y \in \mc{F}$. 
    \item $\N\in \mc{F}$ and $\emptyset \notin \mc{F}$. 
    \item For every subset $X \in \N$, exactly one of $X$ and $\N \setminus X$ is in $\mc{F}$. 
\end{enumerate}
\end{definition}

An ultrafilter is called \emph{nonprincipal} (or `free') if it contains no finite subset of $\N$, and therefore all cofinite subsets of $\N$ (see \cref{ex:ultrafilter} for an example of a principal and nonprincipal ultrafilter). This  type of ultrafilter will be used to construct the equivalence relation. It can be proven that every infinite set has a nonprincipal ultrafilter on it. 

Given a nonprincipal ultrafilter $\mc{F}$  on $\N$, the equivalence relation $\sim$ on $\R^\N$  is defined as follows: 
$$
(r_n) \sim (s_n) ~~\Leftrightarrow ~~\{n\in \N: r_n = s_n\} \in \mc{F}.
$$
In words, this relation says that two sequences are equivalent if they are the same on a large set of indices obeying some nice conditions.  
 The equivalence class $[r]$ of a sequence $r\in \R^\N$   is given by 
$$
[r] = \{s \in \R^\N: r \sim s\}.
$$
The hyperreals $\Hr$ are defined as the quotient set 
$$ 
{\Hr} :=  \R^\N /\sim~ = \{[r]: r \in \R^\N\}. 
$$

The hyperreals $\Hr$ together with addition ($[r]+[s] = [(r_n + s_n)]$), multiplication ($[r]\cdot[s] = [(r_n\cdot s_n)]$) and the order relation
$$
[r] < [s] ~~\textrm{iff}~~ \{n\in \N: r_n < s_n\} \in \mc{F}, 
$$ 
is an ordered field. 

In $\Hr$ there are \emph{infinitesimal elements}, that is, elements that are positive but smaller than all positive real numbers.  Their multiplicative inverses are \emph{infinitely large}. 
Let us consider as an example the sequence 
$$
\epsilon = (1, \tfrac{1}{2}, \tfrac{1}{3},\tfrac{1}{4}, \ldots) = (\tfrac{1}{n}).
$$ 
The element $[\epsilon] \in \Hr$ is strictly positive, because 
$$
\{n \in \N: \tfrac{1}{n} > 0  \} = \N \in \mc{F}.
$$
If $r$ is any positive real number, then the set 
$$
\{n\in \N: \tfrac{1}{n}< r\} \in  \mc{F}
$$ 
is cofinite, since $\mc{F}$ is a nonprincipal ultrafilter. 
Therefore $\epsilon$ is a \emph{positive infinitesimal}, that is, a positive element that is smaller than all positive real numbers, 
$$
[(0,0,\ldots)]<[\epsilon] < [(r,r,\ldots)].  
$$

On the other,  given a diverging sequence
$$\omega = (1,2,3,\ldots), 
$$ 
 $[\omega]$ is  a \emph{positive infinite element}. 
More generally, all equivalence classes of sequences that converge to $0$ are infinitesimal elements in $\Hr$, 
and all equivalence classes of diverging sequences are infinite elements.

\begin{example}[Ultrafilters on $\N$]\label{ex:ultrafilter}\quad
\begin{enumerate}[label=(\roman*),ref=(\roman*)]
\item By fixing one natural number, say 5, one can define an ultrafilter as all subsets of the naturals that contain the element 5. This ultrafilter is \emph{principal} since it contains finite subsets (the subset $\{5\}$ and $\{5,23\}$ for example). 
If one would define $\Hr$ by defining equivalence classes of $\R^{\N}$ using this (principal) ultrafilter, the result will be exactly the usual reals $\R$, since any two elements $a:=(a_n),b:=(b_n)$ are identified whenever $a_5=b_5$, so the rest of the sequence can be ignored. This illustrates why it is crucial that the ultrafilter is nonprincipal.

\item It is not possible to write down an example of a nonprincipal ultrafilter, as it would require to make use of the axiom of choice. On top of all cofinite subsets, one has to choose for all possible ways to divide the natural numbers into two infinite sets which one is in the ultrafilter, in a way that is consistent with the definition (regarding supersets and intersections). 
For example, one has to choose which of the two alternating sequences in (A.1) is identified with $(0,0,0,\ldots)$ and which one with $(1,1,1\ldots)$. 
\end{enumerate}
\demo\end{example}

\section{Tensor stable positivity on $\ell^2$}
\label{app:l2}

In this appendix we prove the existence of essential tsp maps on the sequence space $\ell^2_\C$ (\cref{thm:l2}). 
$\ell^2_\C$ is the subspace of $\C^{\N}$ of all sequences $(x_n)$ with  $\sum_{n=1}^{\infty} |x_n|^2 < \infty$, i.e.\ which are square summable.
We denote the corresponding subset with elements from $\R$ by $\ell^2$.

We start by  defining a notion of positivity in $\ell^2$ (\cref{sec:l21}), then 
positive semidefinite matrices over $\ell^2_\C$ (\cref{sec:l22}),  
positive linear maps and tsp maps on $\ell^2_\C$ (\cref{sec:l24}), 
 and finally prove the existence of essential tsp maps (\cref{app:tsp_l2}). 
We also explore the existence of an inner product on $\ell^2_\C$ (\cref{app:l2_inner}). 

\subsection{Positivity on $\ell^2$}
\label{sec:l21}

In order to define positivity on $\ell^2$, we fix a nonprincipal ultrafilter $\F$ on $\N$ (\cref{def:ultrafilter}) and use it for all upcoming definitions. 

\begin{definition}[Positivity of elements in $\ell^2$]\label{def:l2pos}
Given an ultrafilter $\F$, an element $(x_n)  \in \ell^2$ is called \emph{$\ell^2$-nonnegative}, denoted  
$(x_n) \geq_{\ell^2} 0$,  if 
$$
\{n: x_n \geq 0\} \in \mc{F}.
$$
\end{definition}

Note that with this definition, 
$\ell^2$-nonpositive ($\leq_{\ell^2}$), 
$\ell^2$-negative ($<_{\ell^2}$), 
$\ell^2$-positive ($>_{\ell^2}$)  are also defined---for example, 
$$
a <_{\ell^2} b \quad \textrm{if} \quad \{n : a_n < b_n \} \in \mc{F}.
$$ 
Note also that it can happen that  $a \leq_{\ell^2} b$ and $b \leq_{\ell^2} a$ even though $a \neq b$, namely when they are equal on an index set that is in the ultrafilter. 

This definition of positivity gives rise to a total order: 

\begin{lemma}[Ring ordering]\label{lem:ring}
The subset of $\ell^2$-nonnegative elements, 
$$
T: = \{x \in \ell^2: x \geq_{\ell^2} 0\}, 
$$ 
defines a total ring-ordering on $\ell^2$.
\end{lemma}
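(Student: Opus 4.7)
The plan is to verify the four standard axioms for $T$ to be a total ring-ordering of the commutative ring $\ell^2$: (i) $T+T\subseteq T$, (ii) $T\cdot T\subseteq T$, (iii) $T\cup(-T)=\ell^2$, and (iv) $T\cap(-T)$ is a prime ideal. Each will follow by translating statements about $\ell^2$-nonnegativity into statements about index sets lying in $\F$, and then invoking the ultrafilter axioms from \cref{def:ultrafilter}.

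\textbf{Preliminary: $\ell^2$ is a ring under pointwise multiplication.} Before handling (ii) I must note that for $x,y\in\ell^2$ the product $(x_ny_n)$ is again in $\ell^2$. This is because every square-summable sequence is bounded (in fact $|x_n|\leq \|x\|_{\ell^2}$), so $\sum_n |x_ny_n|^2 \leq (\sup_m|x_m|)^2\sum_n|y_n|^2<\infty$. Thus $\ell^2$ is a commutative ring with componentwise operations, with multiplicative identity $(1,1,1,\ldots)$ (which is \emph{not} in $\ell^2$; so really $\ell^2$ is a ring without unit, but the ordering theory still applies).

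\textbf{Steps (i)--(iii).} For (i), if $x,y\in T$, set $A=\{n:x_n\geq 0\}\in\F$ and $B=\{n:y_n\geq 0\}\in\F$. Closure of $\F$ under intersection gives $A\cap B\in\F$, and on this set $x_n+y_n\geq 0$, so $\{n:x_n+y_n\geq 0\}\supseteq A\cap B$ lies in $\F$ by upward closure. Step (ii) is identical, using that $x_n,y_n\geq 0$ implies $x_ny_n\geq 0$ on $A\cap B$. For (iii), given any $x\in\ell^2$, the sets $\{n:x_n\geq 0\}$ and $\{n:x_n<0\}$ partition $\N$, so by the ultrafilter axiom exactly one lies in $\F$; this places $x$ in $T$ or $-T$ respectively.

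\textbf{Step (iv): $T\cap(-T)$ is a prime ideal.} First I would identify
\[
I:=T\cap(-T)=\{x\in\ell^2 : \{n:x_n=0\}\in\F\},
\]
by observing that $\{n:x_n\geq 0\}\cap\{n:x_n\leq 0\}=\{n:x_n=0\}$ and using that the intersection of two $\F$-sets is in $\F$. Closure of $I$ under addition and under multiplication by arbitrary $\ell^2$-elements is immediate from the inclusions $\{n:x_n=0\}\cap\{n:y_n=0\}\subseteq\{n:x_n+y_n=0\}$ and $\{n:x_n=0\}\subseteq\{n:x_ny_n=0\}$ together with upward closure of $\F$. For primality, suppose $xy\in I$, so the set $\{n:x_ny_n=0\}=\{n:x_n=0\}\cup\{n:y_n=0\}$ lies in $\F$. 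The key point is that an ultrafilter $\F$ satisfies: if $A\cup B\in\F$ then $A\in\F$ or $B\in\F$. Indeed, if $A\notin\F$ then $\N\setminus A\in\F$, hence $B\supseteq(A\cup B)\cap(\N\setminus A)\in\F$. Applying this to $A=\{n:x_n=0\}$, $B=\{n:y_n=0\}$ forces $x\in I$ or $y\in I$, so $I$ is prime.

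The hard part, if any, is purely conceptual rather than technical: one must be comfortable with the fact that $\leq_{\ell^2}$ is antisymmetric only up to the prime ideal $I$, so that the ring-ordering on $\ell^2$ descends to an honest total order on the quotient domain $\ell^2/I$. Everything else is a routine unpacking of the ultrafilter axioms. No inequality beyond $\sup|x_n|\leq \|x\|_{\ell^2}$ is used, and no choice has to be made beyond the fixed nonprincipal ultrafilter $\F$.
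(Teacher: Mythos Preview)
Your proof is correct and follows essentially the same route as the paper: the paper merely lists the ring-ordering axioms and says ``it is easy to check that all these properties are fulfilled,'' then identifies $T\cap(-T)$ as $\{x:\{n:x_n=0\}\in\F\}$ and asserts it is prime---you have simply supplied the routine ultrafilter manipulations that make this explicit. The one cosmetic discrepancy is that the paper's axiom list also includes $R^2\subseteq T$ (all squares are nonnegative), which you do not state separately; but this is immediate since $\{n:x_n^2\geq 0\}=\N\in\F$, so nothing is missing.
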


\begin{proof}
A total ordering $T\subset R$ for a ring $R$ without $1$ has the following properties: 
\begin{enumerate}
\item\label{ord1} $T + T \subseteq T$ (a sum of nonnegative elements is positive),
\item\label{ord2} $T \cdot T \subseteq T$ (a product of nonnegative elements is nonnegative),
\item\label{ord3} $R^2 \subseteq T$ (all squares are nonnegative), 
\item\label{ord4} $T \cup -T = R$ (the union of nonnegative and nonpositive elements form the total ring).
\item\label{ord5} $T \cap -T$ is a prime ideal. 
\end{enumerate}
It is easy to check that properties (\ref{ord1}) - (\ref{ord4}) are fulfilled. The ordering has a nontrivial support $T \cap -T$ that consists of all elements $(x_n) \in \ell^2$ for which $\{n: x_n=0\} \in \F$. To see that this set is a prime ideal, consider two elements 
$(a_n),(b_n) \in \ell^2$ such that $(a_n)\cdot(b_n) \in T \cap -T$, meaning that 
$$
C:=\{n: a_nb_n=0\}\in \F.
$$ 
Assume now, towards a contradiction, that both $A:= \{n:a_n=0\}\notin\F$ and $B:=\{n:b_n=0\}\notin\F$. 
Since  $\F$ is an ultrafilter, we have that $\N\setminus A\in \F$ and $\N\setminus B\in \F$, 
so 
$$
D:= (\N\setminus A)\cap (\N\setminus B) \in \F,
$$ 
but then $\N\setminus D = A \cup B =  C \notin \F$, which is a contradiction. So either $A$ or $B$ are in $\F$, showing that $\F$ is a prime ideal.
\end{proof}

From now on we consider the complex sequences $\ell^2_\C$. In fact we consider a tuple over $\ell^2_\C$, that is, an element of  $(\ell^2_\C)^d$.

\begin{definition}[Quasi-inner product]\label{def:l2inner}
The \emph{quasi-inner product}, denoted $\langle \cdot , \cdot \rangle_{\textrm{seq}}$, 
on $(\ell^2_\C)^d$ is a map 
$$\langle \cdot , \cdot  \rangle_{\textrm{seq}}: (\ell^2_\C)^d \times (\ell^2_\C)^d \rightarrow \ell^2 
$$ 
where 
$$
\langle a,b \rangle_{\textrm{seq}} = (\langle a_n,b_n\rangle), 
$$
where the right hand side uses the standard inner product on $\C^d$ entrywise.
\end{definition}

Since the image of the quasi-inner product is not a field, this is not an inner product. It does however satisfy the other properties of an inner product: it is linear (even $\ell^2$-linear) in the first argument, conjugate symmetric and 
positive definite, namely for $\ell^2_\C \ni a \neq 0$   
$$
\langle a,a \rangle_{\textrm{seq}}>_{\ell^2} 0.
$$ 
Note that this differs from the standard inner product on $\ell^2$ as a Hilbert space (\cref{def:l2standard}). 

\subsection{Matrices on $\ell^2$}
\label{sec:l22}
We now define a notion of psd matrices over $\ell^2_\C$. 

\begin{definition}[Psd over $\ell^2_\C$]
A matrix $A \in \mc{M}_d(\ell^2_\C)$ is called \emph{$\ell^2$-psd}, denoted $A \geqslant_{\ell^2} 0$, if 
$$
\langle v, Av \rangle_{\textrm{seq}} \geq_{\ell^2} 0
$$ for all $v \in (\ell^2_\C)^d$. 
\end{definition}

Note that the symbol for $\ell^2$-psd is $\geqslant_{\ell^2}$ whereas the symbol for $\ell^2$-nonnegative is $\geq_{\ell^2}$, in analogy to psd matrices ($\geqslant$) and nonnegative numbers ($\geq$).
 
A matrix $A \in \mc{M}_d(\ell^2_\C)$ can be seen as a matrix with elements from $\ell^2_\C$, or as a sequence of matrices $A_n$ with elements in $\C$ (obeying the square-summability condition; \cref{fig:l2}). 
One can think of each  $A_n$ as a `layer' of $A$. 

\begin{figure}[t]\centering
\includegraphics[width = 0.7\textwidth]{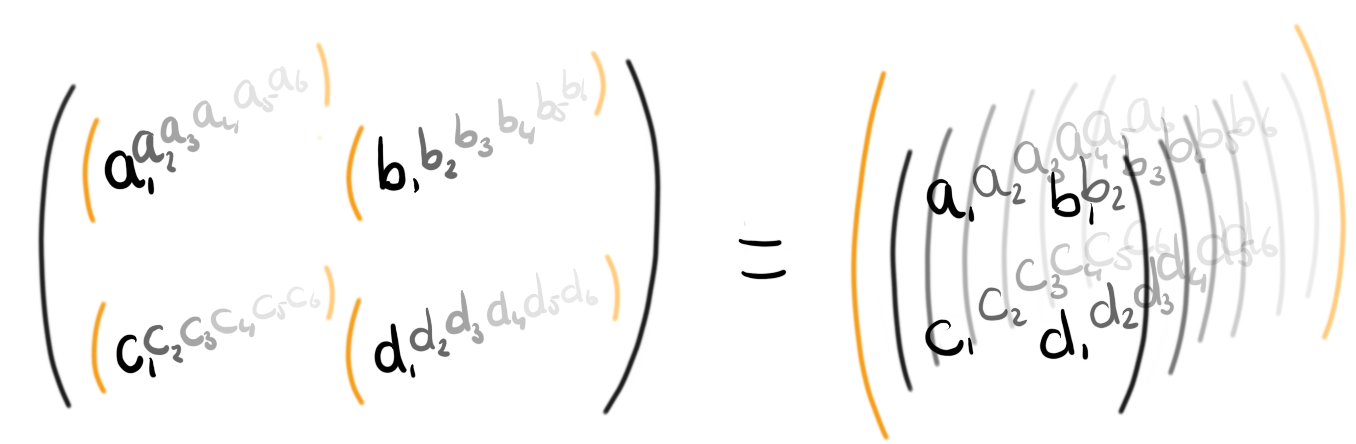}
\caption{Matrices over $\ell^2$ are equivalent to sequences of matrices, with a condition of square-summability on the matrix elements.}
\label{fig:l2}
\end{figure}

\begin{lemma}[Psd of layers] \label{lem:matrix_layers} 
Given a matrix $A \coloneqq (A_n) \in \Ml$, the following are equivalent:
\begin{enumerate}[label=(\roman*)]
\item \label{lem:matrix_layers:i} 
$A \geqslant_{\ell^2} 0$. 
\item  \label{lem:matrix_layers:ii} 
$\{n: A_n \geqslant 0\} \in \F$. 
\end{enumerate}
\end{lemma}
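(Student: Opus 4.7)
The plan is to prove both implications directly, using only the definitions and the defining properties of the ultrafilter $\F$ listed in \cref{def:ultrafilter}. The quasi-inner product acts entrywise, so if we view $v = (v_n) \in (\ell^2_\C)^d$ as a sequence of vectors $v_n \in \C^d$ and $A = (A_n)$ as a sequence of matrices, then $(Av)_n = A_n v_n$ and $\langle v, Av\rangle_{\mathrm{seq}} = (\langle v_n, A_n v_n\rangle)$. This observation converts both conditions into statements about the index set where the corresponding scalar inequality holds, and the equivalence becomes an exercise in ultrafilter bookkeeping.

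For \ref{lem:matrix_layers:ii} $\Rightarrow$ \ref{lem:matrix_layers:i}, I would proceed directly. Set $S = \{n : A_n \geqslant 0\} \in \F$. For any $v = (v_n) \in (\ell^2_\C)^d$, whenever $n \in S$ we have $\langle v_n, A_n v_n\rangle \geq 0$, so $\{n : \langle v_n, A_n v_n\rangle \geq 0\} \supseteq S$. By closure under supersets (property 1), this set is in $\F$, and hence $\langle v, Av\rangle_{\mathrm{seq}} \geq_{\ell^2} 0$ by \cref{def:l2pos}. Since $v$ was arbitrary, $A \geqslant_{\ell^2} 0$.

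For \ref{lem:matrix_layers:i} $\Rightarrow$ \ref{lem:matrix_layers:ii} I would argue by contraposition. Suppose $S = \{n : A_n \geqslant 0\} \notin \F$. By property 4 of ultrafilters, the complement $S^c = \{n : A_n \not\geqslant 0\}$ lies in $\F$. For each $n \in S^c$, pick a unit vector $w_n \in \C^d$ with $\langle w_n, A_n w_n\rangle < 0$, and define $v_n := w_n / n$ for $n \in S^c$ and $v_n := 0$ otherwise. Then $\sum_n \|v_n\|^2 \leq \sum_{n \geq 1} 1/n^2 < \infty$, so $v \in (\ell^2_\C)^d$. A direct computation shows $\langle v_n, A_n v_n\rangle = n^{-2}\langle w_n, A_n w_n\rangle < 0$ on $S^c$ and $= 0$ on $S$. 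Hence $\{n : \langle v_n, A_n v_n\rangle < 0\} = S^c \in \F$, which by property 4 forces $\{n : \langle v_n, A_n v_n\rangle \geq 0\} \notin \F$. Therefore $\langle v, Av\rangle_{\mathrm{seq}} \not\geq_{\ell^2} 0$, so $A \not\geqslant_{\ell^2} 0$.

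The only subtle point is ensuring that the witnessing vector $v$ lies in $(\ell^2_\C)^d$; this is handled by the $1/n$ rescaling, which exploits that the notion of positivity is scale-invariant on each layer. The existence of the unit witnesses $w_n$ on $S^c$ uses nothing beyond the standard definition of (non-)positive semidefiniteness on each finite-dimensional layer $\C^d$, and no Choice is needed beyond a dependent choice over the countable index set.
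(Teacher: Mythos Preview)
Your proposal is correct and follows essentially the same approach as the paper's own proof: both directions are argued exactly as you do, with (ii)$\Rightarrow$(i) via closure of $\F$ under supersets and (i)$\Rightarrow$(ii) via the contrapositive, choosing witnessing vectors on the complement set and rescaling to land in $(\ell^2_\C)^d$. Your version is in fact slightly more explicit (specifying the $1/n$ rescaling and the value $0$ on $S$), whereas the paper leaves these as ``rescale by a factor depending on $n$'' and ``fill arbitrarily''.
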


\begin{proof}
We prove the contrapositive of \ref{lem:matrix_layers:i} $\Rightarrow$ \ref{lem:matrix_layers:ii}.  
Consider a matrix $A: = (A_n) \in \Ml$ such that $\{n: A_n \geqslant 0\} \notin \F$, and define the complement of this set as 
$$
X\coloneqq \{n: A_n \ngeqslant 0\} \in \F. 
$$
Then for all $m\in X$ there is a $w_m\in \C^d$ such that 
$$
\langle w_m, A_m w_m\rangle <0.
$$ 
We now construct $v = (v_n)$ by setting $v_n \coloneqq w_n$ for all $n\in X$ and fill the rest of the entries arbitrarily. 
To ensure that $v \in  (\ell^2_\C)^d$, we rescale $w_n$ with a factor depending on $n$, such that the $\ell^2$ condition of square summability is satisfied, resulting 
a vector $v \in  (\ell^2_\C)^d$ for which 
$$
\langle v, Av \rangle <_{\ell^2} 0.
$$   
It follows that $A$ is not $\ell^2$-psd, namely $A \ngeqslant_{\ell^2}0$. 

\ref{lem:matrix_layers:ii} $\Rightarrow$ \ref{lem:matrix_layers:i}. Given a matrix $A := (A_n)  \in \Ml$, define the set of entries whose layer is psd as 
$$
Y\coloneqq \{n: A_n \geqslant 0\} \in \F.
$$
 For every $w= (w_n) \in  (\ell^2_\C)^d$, define 
 $$ Z_w\coloneqq \{n:  \langle w_n,A_n w_n \rangle \geq 0\} 
 $$ 
 We know that for all such $w$,  $Z_w\supseteq Y$, 
 since $Z_w$ considers a contraction with a specific element $w$ whereas $Y$ considers a contraction with all elements, so $Z_w$ contains the indices of  $Y$ and perhaps more.   
 Since the ultrafilter is closed under supersets, it follows that $ Z_w\in \F$ for all such $w$. 
 Therefore, $\langle w, Aw \rangle \geq_{\ell^2} 0$ for all $w \in  (\ell^2_\C)^d$, that is, $A\geqslant_{\ell^2} 0$. 
\end{proof}

\subsection{Tensor stable positivity on $\ell^2_\C$}
\label{sec:l24}

We now consider linear maps 
$$
\Pp: \Mm_{d}(\ell^2_\C) \rightarrow \Mm_{d}(\ell^2_\C).
$$ 
As with matrices and vectors, we are interested in linear maps $\Pp$   that act `layerwise', i.e. 
\be\label{eq:layerlinear}
\Pp\coloneqq (\Pp_n) ~\textrm{where} ~\Pp_n: \mc{M}_d(\C) \rightarrow \mc{M}_d(\C).
\ee
Not every map linear map acts layerwise, but exactly the $\ell^2_\C$-linear maps (i.e.\ linear under multiplication with elements from $\ell^2_\C$) have this property. 

The image of an  $\ell^2_\C$-linear map $\Pp$ is in $\Mm_{d}(\ell^2_\C)$ if  $\Pp$ is \emph{uniformly bounded}, meaning that there exists a common bound on $||\Pp_n||_{\textrm{op}}$ for all $n$. Here the operator norm of a linear map $A: V \rightarrow W$ is defined as usual
$$
||A||_{\textrm{op}} = \textrm{inf}\{c\geq 0: |Av| \leq c|v| \:\: \forall v\in V\}.
$$ 
The uniformly bounded maps satisfying \eqref{eq:layerlinear} are called \emph{uniformly bounded linear} maps. 

Let us now define $\ell^2$-positivity of uniformly bounded linear maps.  

\begin{definition}[Positivity of uniformly bounded linear maps]
A uniformly bounded linear map $\Pp$ is \emph{positive}, denoted  $\Pp \succcurlyeq_{\ell^2} 0$, if it maps $\ell^2$-psd matrices to $\ell^2$-psd matrices, that is, 
$$
A \geqslant_{\ell_2} 0 \implies \Pp(A) \geqslant_{\ell_2} 0.
$$
\end{definition}

Note that we again follow the convention of denoting the positivity of maps by $\succcurlyeq_{\ell^2}$, in analogy with positive maps ($\succcurlyeq$).

\begin{lemma}[Positive maps under layers]
Given a uniformly bounded linear map $\Pp=(\Pp_n)$ 
the following two statements are equivalent: 
\begin{enumerate}[label=(\roman*)]
\item  $\Pp \succcurlyeq_{\ell^2} 0$.
\item $\{n: \Pp_n \succcurlyeq  0\} \in \F$. 
\end{enumerate}
\end{lemma}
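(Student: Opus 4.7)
The plan is to proceed in exactly the same spirit as \cref{lem:matrix_layers}, exploiting the fact that an $\ell^2_\C$-linear, uniformly bounded map $\Pp = (\Pp_n)$ acts layerwise, so that $\Pp(A)_n = \Pp_n(A_n)$ for any $A = (A_n) \in \Ml$. This reduces the question of $\ell^2$-positivity of $\Pp$ to a statement about the index set $\{n : \Pp_n \succcurlyeq 0\}$, and the rest of the argument is then a bookkeeping exercise with the ultrafilter axioms combined with \cref{lem:matrix_layers}.

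For the direction (ii) $\Rightarrow$ (i), I would take any $A = (A_n) \geqslant_{\ell^2} 0$. By \cref{lem:matrix_layers}, the set $X = \{n : A_n \geqslant 0\} \in \F$. Let $Y = \{n : \Pp_n \succcurlyeq 0\} \in \F$ by hypothesis. Then $X \cap Y \in \F$ by closure under finite intersections, and for every $n \in X \cap Y$ we have $\Pp_n(A_n) \geqslant 0$. Since $\Pp(A)_n = \Pp_n(A_n)$, the set $\{n : \Pp(A)_n \geqslant 0\}$ contains $X \cap Y$ and thus lies in $\F$ by closure under supersets. Applying \cref{lem:matrix_layers} once more gives $\Pp(A) \geqslant_{\ell^2} 0$. (One also has to note that $\Pp(A) \in \Ml$, which is precisely why the uniform boundedness hypothesis is in place: $\|\Pp_n(A_n)\| \leq C\|A_n\|$ for a uniform constant $C$ guarantees square-summability of the entries.)

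For (i) $\Rightarrow$ (ii) I would argue by contraposition. Assume $Y = \{n : \Pp_n \succcurlyeq 0\} \notin \F$, so by ultrafilter axiom~(4) its complement $X := \{n : \Pp_n \not\succcurlyeq 0\}$ lies in $\F$. For each $m \in X$ pick a psd witness $W_m \in \mc{M}_d(\C)$ with $\Pp_m(W_m) \not\geqslant 0$, and outside $X$ set the layer to $0$. Since scaling by a positive real preserves both psd-ness of $W_m$ and non-psd-ness of $\Pp_m(W_m)$, I can replace $W_m$ by $c_m W_m$ for any positive real $c_m > 0$ without changing the witnessing properties. Choosing $c_m$ small enough (e.g.\ $c_m = 1 / (m^2 (1 + \|W_m\|))$) makes the entrywise sequences square summable, so the assembled $A = (c_n W_n)_n$ lies in $\Ml$. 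By construction $A \geqslant_{\ell^2} 0$ (all nonzero layers are psd), while $\{n : \Pp(A)_n \geqslant 0\}$ has complement containing $X \in \F$, hence does not itself lie in $\F$. \Cref{lem:matrix_layers} then yields $\Pp(A) \not\geqslant_{\ell^2} 0$, so $\Pp \not\succcurlyeq_{\ell^2} 0$.

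The only step with any real substance is the rescaling in the converse direction: one must ensure the witness $A$ is simultaneously psd layerwise and has each entry square summable, so that it is a legitimate element of $\Ml$. This is essentially the same subtlety addressed in the proof of \cref{lem:matrix_layers}, and it is the reason uniform boundedness of $\Pp$ is assumed---without it, $\Pp(A)$ might fail to lie in $\Ml$ even when $A$ does, and the statement would not type-check. Once that scaling trick is in hand, everything else is pure ultrafilter combinatorics combined with \cref{lem:matrix_layers}.
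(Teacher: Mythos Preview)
Your proposal is correct and takes essentially the same approach as the paper: the paper's proof consists of the single sentence ``The proof is analogous to that of \cref{lem:matrix_layers},'' and what you have written is precisely that analogous argument spelled out in full, including the same rescaling trick to force the witness into $\Ml$.
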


The proof is analogous to that of \cref{lem:matrix_layers}.

We now define a tensor product on $\ell^2$ in the expected way.  

\begin{definition}[Tensor product on $\ell^2$]
The \emph{tensor product} on $(\ell^2_\C)^d$, denoted $\otimes_{\ell^2}$, is the bilinear map  
\begin{align*} 
  \otimes_{\ell^2}  : (\ell^2_\C)^d \times (\ell^2_\C)^d & \rightarrow (\ell^2_\C)^{d^2} 
  \end{align*}
  where
\begin{align*}  
 (a_n)    \otimes_{\ell^2} (b_n) \coloneqq (a_n \otimes b_n)  
 \end{align*}
where the right hand side uses the standard tensor product on $\C^d$ entrywise. 
\end{definition}

In order to define a notion of $\ell^2$-positivity of linear maps on $\mc{M}_d(\ell^2_\C) \otimes_{\ell^2}  \mc{M}_d(\ell^2_\C)$ we use the natural isomorphism 
$$
\mc{M}_d(\ell^2_\C) \otimes_{\ell^2} \mc{M}_d(\ell^2_\C) \cong \mc{M}_{d^2}(\ell^2_\C). 
$$
This equivalence allows to use the $\ell^2$-positivity from the right hand side on the tensor products on the left hand side. Namely, a linear map 
is $\ell^2$-positive if it maps the set of $\ell^2$-psd matrices in $\mc{M}_d(\ell^2_\C) \otimes_{\ell^2} \mc{M}_d(\ell^2_\C)$ to itself. 
With this can define $\ell^2$-tsp as one would expect: 

\begin{definition}[$\ell^2$-tsp] 
Let $\Pp: \mc{M}_d(\ell^2_\C) \rightarrow \mc{M}_d(\ell^2_\C)$ be a uniformly bounded linear map. 
\begin{enumerate}[label=(\roman*),ref=(\roman*)]
\item 
$\Pp$ is \emph{$\ell^2$-$n$-tsp}  if $\Pp^{\otimes_{\ell^2}  n}$  
is $\ell^2$-positive.  
\item 
$\Pp$ is \emph{$\ell^2$-tsp} if $\Pp^{\otimes_{\ell^2} n}$ 
is $\ell^2$-positive for all $n\in\N$.  
\end{enumerate}
\end{definition}

For the following we denote the identity map on $\Mm_{d}(\ell^2_\C)$  
 by  $\textrm{id}_d$, and the transposition map by 
\begin{align*}
\theta_d:  \Mm_{d}(\ell^2_\C) &\rightarrow \Mm_{d}(\ell^2_\C)\\
(A_n) & \mapsto (A_n^T), 
\end{align*}
where  
${}^T$  denotes  the usual transposition on $\mc{M}_d(\C)$. 

By Choi's Theorem, complete positivity of a map $\Pp: \mc{M}_d(\C) \rightarrow \mc{M}_d(\C)$ is equivalent to $d$-positivity of the map (\cref{ssec:tsp}). 
We use this result to define $\ell^2$-complete positivity. 

\begin{definition}[$\ell^2$-completely (co)positive map]\label{def:l2cp} 
Let $\Pp: \Mm_{d}(\ell^2_\C) \rightarrow \Mm_{d}(\ell^2_\C)$ be a  uniformly bounded linear map. 
\begin{enumerate}[label=(\roman*), ref=(\roman*)]
\item 
$\Pp$ is \emph{$\ell^2$-completely positive} if 
$$
\Pp \otimes_{\ell^2}  \textrm{id}_d \succcurlyeq_{\ell^2} 0.
$$ 
\item $\Pp$  is \emph{$\ell^2$-completely co-positive} if 
$\Pp=\theta_d \circ \mc{S}$ for some   $\ell^2$-completely positive map $\mc{S}$.  
\end{enumerate}
\end{definition}

\begin{lemma}[$\ell^2$-completely (co-)positivity under layers]\label{lem:trivial_layers}
Let $\Pp: \Mm_{d}(\ell^2_\C) \rightarrow \Mm_{d}(\ell^2_\C)$ be a  uniformly bounded linear map. 
$\Pp$ is $\ell^2$-completely (co-)positive if and only if 
$$
\{n: \Pp_n ~\textrm{is completely (co-)positive} \} \in \F.
$$ 
\end{lemma}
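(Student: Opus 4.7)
The plan is to reduce both statements to the previously established lemma on positive maps under layers, by exploiting the fact that every operation involved -- the tensor product on $(\ell^2_\C)^d$, the transposition $\theta_d$, and composition -- acts layerwise under our definitions.

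First I would verify a preparatory observation: if $\Pp = (\Pp_n)$ is a uniformly bounded linear map, then $\Pp \otimes_{\ell^2} \textrm{id}_d$ also acts layerwise, with $n$-th layer $\Pp_n \otimes \textrm{id}_d$. This is immediate from the entrywise definition of $\otimes_{\ell^2}$ and the layerwise action of $\Pp$ on elements of $\mathcal{M}_d(\ell^2_\C)$. Moreover, uniform boundedness of $\Pp$ implies uniform boundedness of $\Pp \otimes_{\ell^2} \textrm{id}_d$, so the image lies in $\mathcal{M}_{d^2}(\ell^2_\C)$ as required.

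For the completely positive case, I would combine this observation with Definition \ref{def:l2cp} and the Lemma \emph{Positive maps under layers} applied to $\Pp \otimes_{\ell^2} \textrm{id}_d$, yielding
\[
\Pp \otimes_{\ell^2} \textrm{id}_d \succcurlyeq_{\ell^2} 0 \iff \{n : \Pp_n \otimes \textrm{id}_d \succcurlyeq 0\} \in \F.
\]
By Choi's theorem over $\C$, $\Pp_n \otimes \textrm{id}_d \succcurlyeq 0$ is equivalent to $\Pp_n$ being completely positive, so the two index sets $\{n : \Pp_n \otimes \textrm{id}_d \succcurlyeq 0\}$ and $\{n : \Pp_n \text{ is CP}\}$ coincide, completing this direction.

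For the completely co-positive case, I would proceed in two directions. The forward direction is immediate: if $\Pp = \theta_d \circ \mathcal{S}$ with $\mathcal{S}$ $\ell^2$-CP, then $\Pp_n = \theta_d \circ \mathcal{S}_n$, and by the CP part of the lemma the index set where $\mathcal{S}_n$ is CP lies in $\F$, so the same set witnesses that $\Pp_n$ is completely co-positive. For the converse, I would define $\mathcal{S} := \theta_d \circ \Pp$, which is a uniformly bounded linear map since $\theta_d$ is an isometry and $\Pp$ is uniformly bounded, and satisfies $\mathcal{S}_n = \theta_d \circ \Pp_n$ at each layer and $\theta_d \circ \mathcal{S} = \Pp$ by $\theta_d^2 = \textrm{id}$. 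The hypothesis gives that $\{n : \Pp_n \text{ is cCP}\} \in \F$, and on this set $\mathcal{S}_n$ is CP; invoking the already-proved CP part shows $\mathcal{S}$ is $\ell^2$-CP, and thus $\Pp$ is $\ell^2$-completely co-positive.

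The proof is essentially bookkeeping; the only potential obstacle is to confirm that the layerwise structure really is preserved by all constructions -- tensor products, composition with $\theta_d$, and uniform boundedness -- but each of these has been set up entrywise in the preceding definitions, so no subtlety arises.
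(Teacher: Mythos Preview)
Your proposal is correct and follows exactly the approach the paper intends: the paper's own proof is the single line ``This follows from the behavior of $\ell^2$-tensor products and $\ell^2$-positivity under the layers,'' and you have simply unpacked what that sentence means, invoking the \emph{Positive maps under layers} lemma for $\Pp\otimes_{\ell^2}\mathrm{id}_d$ and Choi's theorem layerwise, together with the involutive, layerwise action of $\theta_d$ for the co-positive case.
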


\begin{proof}
This follows from the behavior of $\ell^2$-tensor products and $\ell^2$-positivity under the layers.
\end{proof}

We call $\ell^2$-completely positive and $\ell^2$-completely co-positive maps  \emph{trivial} $\ell^2$-tsp maps, and those which are not trivial \emph{essential}  $\ell^2$-tsp maps. 

\begin{lemma}[Trivial $\ell^2$-tsp maps]\label{lem:trivial_maps}
$\ell^2$-completely positive and $\ell^2$-completely co-positive maps are $\ell^2$-tsp.
\end{lemma}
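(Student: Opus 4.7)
The plan is to reduce the statement to the classical fact that CP (and coCP) maps are tensor stable positive over $\C$, by invoking the layerwise characterization established in \cref{lem:trivial_layers}. First I would verify the basic compatibility statement that for any uniformly bounded $\ell^2_\C$-linear map $\mc{P} = (\mc{P}_n)$, the $n$-th layer of the $\ell^2$-tensor power $\mc{P}^{\otimes_{\ell^2} k}$ is exactly $\mc{P}_n^{\otimes k}$, viewed as a map $\mc{M}_{d^k}(\C) \to \mc{M}_{d^k}(\C)$. This is essentially a bookkeeping check using the definition of $\otimes_{\ell^2}$, together with the fact that uniform boundedness of $\mc{P}$ implies uniform boundedness of every tensor power (since $\|\mc{P}_n^{\otimes k}\|_{\text{op}} \leq \|\mc{P}_n\|_{\text{op}}^k$), so that the resulting map is again a legitimate uniformly bounded linear map.

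Once that compatibility is in hand, the analogue of \cref{lem:matrix_layers} for linear maps (stated just above the claim) gives that $\mc{P}$ is $\ell^2$-tsp if and only if for every $k \in \N$ one has
$$\{n : \mc{P}_n^{\otimes k} \text{ is positive}\} \in \F.$$
So it suffices to produce such a set for every $k$. For the CP case, \cref{lem:trivial_layers} yields $\{n : \mc{P}_n \text{ is CP}\} \in \F$. Since any CP map over $\C$ is tsp (its tensor powers are CP, hence positive), I get the inclusion
$$\{n : \mc{P}_n \text{ is CP}\} \subseteq \{n : \mc{P}_n^{\otimes k} \text{ is positive}\},$$
and closure of $\F$ under supersets places the right hand side in $\F$.

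The coCP case is only marginally more involved: write $\mc{P} = \theta_d \circ \mc{S}$ with $\mc{S}$ an $\ell^2$-CP map, so by \cref{lem:trivial_layers} the indices where $\mc{S}_n$ is CP form a set in $\F$, and on those indices $\mc{P}_n = \theta_d \circ \mc{S}_n$. Then $\mc{P}_n^{\otimes k} = \theta_d^{\otimes k} \circ \mc{S}_n^{\otimes k}$, and since the full transposition $\theta_d^{\otimes k}$ on $\mc{M}_{d^k}(\C)$ is a positive map and $\mc{S}_n^{\otimes k}$ is CP (hence positive), the composition is positive. Thus the same superset argument applies and $\mc{P}^{\otimes_{\ell^2} k}$ is $\ell^2$-positive for every $k$.

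I do not anticipate a substantive obstacle here: the content of the lemma is entirely a transfer across the ultrafilter, and the only place where care is needed is the layerwise compatibility of $\otimes_{\ell^2}$ with $\mc{P}^{\otimes_{\ell^2} k}$ (and the attendant uniform bound), which is essentially definitional. The result would also go through verbatim for any property that is layerwise preserved under tensor powers and composition with positive maps, which suggests this is the ``right'' notion in this setting.
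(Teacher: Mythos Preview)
Your proposal is correct and follows essentially the same route as the paper: reduce to the classical fact that CP and coCP maps over $\C$ are tsp via the layerwise characterizations (\cref{lem:trivial_layers} and the map analogue of \cref{lem:matrix_layers}), then use closure of $\F$ under supersets. The paper compresses this into two sentences by observing directly that $\{n:\Pp_n\text{ is tsp}\}\in\F$ suffices; your version unpacks this one tensor power at a time and adds the uniform-boundedness check, but the underlying argument is identical.
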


\begin{proof}
By the behavior of $\ell^2$-tensor products and $\ell^2$-positivity, a map is $\ell^2$-tsp if 
$$
\{n: \Pp_n ~\textrm{is tsp} \} \in \F. 
$$
Using \cref{lem:trivial_layers} and the fact that completely (co-)positive maps from $\mc{M}_d(\C)$ to $\mc{M}_d(\C)$ are tsp, this concludes the proof. 
\end{proof}

\subsection{Existence of essential tsp on $\ell^2$}
\label{app:tsp_l2}

Based on all definitions and results above, we now show that essential tsp maps exist over $\ell^2_\C$. 

\begin{theorem}[Essential $\ell^2$-tsp maps] \label{thm:l2}
There exist essential $\ell^2$-tsp uniformly bounded linear maps $\Pp: \Mm_{d}(\ell^2_\C) \rightarrow \Mm_{d}(\ell^2_\C)$. 
\end{theorem}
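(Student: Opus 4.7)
The plan is to build an essential $\ell^2$-tsp map $\Pp=(\Pp_n)$ layerwise, where each layer $\Pp_n$ is an essential $n$-tsp map on $\mc{M}_3(\C)$ in the sense of \cite{Mu16}, and then use the layerwise characterizations of $\ell^2$-positivity, together with the fact that a nonprincipal ultrafilter $\F$ on $\N$ contains every cofinite set, to lift these finite-level properties to $\ell^2$.

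First I would fix the ``base'' Choi matrix $C_\Pp\in\mc{M}_3\otimes\mc{M}_3$ satisfying properties~\hyperref[P]{(P)} given in Equation~\eqref{eq:example_map} with $d_1=d_2=3$. For each $n\in\N$ choose $\epsilon_n$ equal to the upper bound in Equation~\eqref{eq:eps}, and let $\Pp_n:\mc{M}_3(\C)\to\mc{M}_3(\C)$ be the linear map with Choi matrix $C_{\Pp_n}:=C_\Pp-\epsilon_n\mathbbm{1}$. By Statements~\ref{1} and~\ref{2} recalled in \cref{ssec:non-triv-tsp}, $\Pp_n$ is $n$-tsp but is neither completely positive nor completely co-positive. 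Since $\epsilon_n\to 0$, the operator norms $\|\Pp_n\|_{\mathrm{op}}$ are bounded by $\|C_\Pp\|_\infty+1$, so the sequence $\Pp:=(\Pp_n)$ defines a uniformly bounded linear map $\Mm_3(\ell^2_\C)\to\Mm_3(\ell^2_\C)$ in the sense of \cref{sec:l24}.

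Next I would show $\Pp$ is $\ell^2$-tsp. For any $m\in\N$, the layerwise formalism gives that $\Pp^{\otimes_{\ell^2}m}$ acts as the sequence $(\Pp_n^{\otimes m})$, so by the layer characterization of $\ell^2$-positive maps,
\[
\Pp^{\otimes_{\ell^2}m}\succcurlyeq_{\ell^2}0
\iff
\{n\in\N:\Pp_n\in\mathrm{TSP}_m\}\in\F.
\]
Now by the nesting $\mathrm{TSP}_k\subseteq\mathrm{TSP}_m$ for $k\geq m$ (\cref{lem:geom}~\ref{lem:geom:iii}), and since $\Pp_n\in\mathrm{TSP}_n$ by construction, this set contains the cofinite set $\{n:n\geq m\}$. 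Because $\F$ is a nonprincipal ultrafilter it contains every cofinite set, so $\{n:\Pp_n\in\mathrm{TSP}_m\}\in\F$ for every $m$, proving that $\Pp$ is $\ell^2$-tsp.

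Finally I would show $\Pp$ is essential. By \cref{lem:trivial_layers}, $\Pp$ is $\ell^2$-completely positive iff $\{n:\Pp_n\text{ is CP}\}\in\F$, and similarly for $\ell^2$-completely co-positive. By our choice of $\Pp_n$ both of these index sets are empty, and $\emptyset\notin\F$, so $\Pp$ is neither $\ell^2$-completely positive nor $\ell^2$-completely co-positive.

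The only real obstacle is the quantitative input from \cite{Mu16} already recalled in \cref{ssec:non-triv-tsp}: one needs a single Choi matrix with properties~\hyperref[P]{(P)} together with an explicit threshold $\epsilon_n>0$ such that $C_\Pp-\epsilon_n\mathbbm{1}$ is block positive after $n$-fold tensoring while its partial transpose is not psd; everything else is a mechanical transfer through the layerwise lemmas and the cofiniteness of $\F$.
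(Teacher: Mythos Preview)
Your proposal is correct and follows essentially the same route as the paper's own proof: build $\Pp=(\Pp_n)$ layerwise from essential $n$-tsp maps, use the layerwise description $\Pp^{\otimes_{\ell^2}m}=(\Pp_n^{\otimes m})$, invoke cofiniteness of $\{n\geq m\}$ to get $\ell^2$-$m$-tsp for all $m$, and conclude essentiality from \cref{lem:trivial_layers}. The only difference is cosmetic: you pin down the specific $\Pp_n$ via the Choi matrix $C_\Pp-\epsilon_n\mathbbm{1}$ from \eqref{eq:example_map} and argue uniform boundedness from $\epsilon_n\to 0$, whereas the paper simply cites the existence of essential $n$-tsp maps from \cite{Mu16} and remarks that one may rescale to enforce uniform boundedness.
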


\begin{proof}
For every $n$ there exist an essential $n$-tsp map $\Pp_n: \Mm_d(\C) \rightarrow \Mm_d(\C)$ \cite{Mu16}. 
For every $n$ we fix such a map, and construct the uniformly bounded linear map 
$$
\Pp = (\Pp_n): \Mm_{d}(\ell^2_\C) \rightarrow \Mm_{d}(\ell^2_\C). 
$$
We may need to rescale every $\Pp_n$ by a constant  factor to enforce the uniform boundedness condition. 
By \cref{lem:trivial_layers} the map $\Pp$ is essential. 
By the definition of the $\ell^2$-tensor product the $m$-th tensor power of this map is 
$$
\Pp^{\otimes_{\ell^2}  m} = (\Pp_n^{\otimes m}). 
$$ 
Moreover, for any $m$, $\Pp_n^{\otimes m} \succcurlyeq 0$ for all   $n\geq m$.   
Therefore 
$$
\{n: \Pp_n^{\otimes m} \succcurlyeq 0 \} \in \F \quad \forall m, 
$$
since this is a co-finite subset of $\N$. By the definition of $\ell^2$-positivity of linear maps, we conclude that $\Pp^{\otimes_{\ell^2}  m}$ is positive for all $m$, and is therefore essential $\ell^2$-tsp. 
\end{proof}

Note that the construction heavily relies on our chosen notion of positivity, since we specifically use the ultrafilter. 

\subsection{Inner product on $\ell^2$}
\label{app:l2_inner}

Following  \cref{sec:hyper}, we  explore the existence of NPT bound entangled states on $\ell^2$, using the fact that this space is an infinite dimensional Hilbert space. 
The standard inner product on $\ell^2$ is defined as follows: 

\begin{definition}[Standard  inner product] \label{def:l2standard} 
Given $a,b \in \ell^2_\C$,  the \emph{standard inner product}, denoted $\langle \: , \:\rangle_{\textrm{st}}$, is given by 
$$
\langle a,b\rangle_{\textrm{st}} = \sum_{n=1}^{\infty} \bar{a}_n b_n.
$$
\end{definition}

It is immediate to show  that this inner product comes with a notion of positivity that does not coincide with the positivity of the quasi-inner product $\langle \cdot,\cdot\rangle_{\textrm{seq}}$ of \cref{def:l2inner}. 
For example, for  the following elements in $\ell^2 $
$$
u = (-1,0,0,0\ldots), \quad v = (1,0,0,0,\ldots)
$$
we have that $\langle u,v\rangle_{\textrm{st}} = -1 $,  
while 
$$
\langle u,v\rangle_{\textrm{seq}} = (-1,0,0,0,\ldots) \geq_{\ell^2} 0.
$$
This is not only the case for the standard inner product, but any inner product on this space will have the same problem:

\begin{proposition}[Impossibility of inner product]\label{pro:inner} 
There does not exist an inner product $\langle\cdot,\cdot\rangle_{\ell^2}: \ell^2 \times \ell^2 \rightarrow \R$  
such that 
$$
\langle x,y\rangle_{\ell^2} \geq 0 \quad \textrm{iff} \quad  \langle x,y\rangle_{\textrm{seq}} \geq_{\ell^2} 0 
$$ 
for all $x,y \in \ell^2$. 
\end{proposition}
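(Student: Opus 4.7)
The plan is to assume such an inner product $\langle\cdot,\cdot\rangle_{\ell^2}$ exists and manufacture a nonzero $v\in\ell^2$ whose self--quasi-inner product $\langle v,v\rangle_{\textrm{seq}}$ is simultaneously $\geq_{\ell^2} 0$ and $\leq_{\ell^2} 0$. By the hypothesised equivalence this will force $\langle v,v\rangle_{\ell^2}=0$, contradicting positive definiteness.

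First, I would observe that the hypothesis for $\geq$ automatically yields the analogous equivalence for $\leq$: applying the hypothesis to the pair $(x,-y)$, using bilinearity of the would-be inner product and the entrywise identity $\langle x,-y\rangle_{\textrm{seq}}=-\langle x,y\rangle_{\textrm{seq}}$, gives $\langle x,y\rangle_{\ell^2}\leq 0$ iff $\langle x,y\rangle_{\textrm{seq}}\leq_{\ell^2}0$. In particular, whenever $x_n y_n=0$ on a set belonging to $\F$, both directions apply and $\langle x,y\rangle_{\ell^2}=0$.

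Next, I would invoke property (4) of \cref{def:ultrafilter} for the partition $\N = X \sqcup Y$ into even and odd integers: exactly one of $X,Y$ lies in $\F$. Relabelling if necessary, assume $X\in\F$ while $Y\notin\F$, so $Y$ is infinite. Define
$$
v \coloneqq \sum_{n\in Y}\frac{1}{n}\,e_n,
$$
where $e_n$ is the $n$-th standard unit sequence. Then $v\in\ell^2$ (by $\sum 1/n^2<\infty$) and $v\neq 0$ (because $Y$ is nonempty). By construction $v_n=0$ for all $n\in X$, hence $v_n^2=0$ on the set $X\in\F$. Consequently $\langle v,v\rangle_{\textrm{seq}}=(v_n^2)$ is both $\geq_{\ell^2}0$ (all entries are nonnegative) and $\leq_{\ell^2}0$ (it vanishes on a set in $\F$). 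The equivalence, now in both directions, forces $\langle v,v\rangle_{\ell^2}\geq 0$ and $\langle v,v\rangle_{\ell^2}\leq 0$, so $\langle v,v\rangle_{\ell^2}=0$. Positive definiteness then demands $v=0$, a contradiction.

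The only real obstacle is the construction of the witness $v$: one needs an element of $\ell^2$ that is nonzero yet has zero coordinates on an $\F$-large index set. This is guaranteed by the ultrafilter axiom applied to any partition of $\N$ into two infinite subsets, and once such $v$ is exhibited the contradiction is immediate. Beyond that, the argument is essentially a one-line use of positive definiteness.
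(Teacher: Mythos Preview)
Your proof is correct, but it takes a different route from the paper's. The paper exhibits a pair $a=x+\eps y$, $b=x-\eps y$ (with $x=e_1$ and $y$ supported on indices $\geq 2$) for which $\langle a,b\rangle_{\textrm{seq}}<_{\ell^2}0$ strictly, while the bilinear expansion $\langle a,b\rangle_{\ell^2}=\langle x,x\rangle_{\ell^2}-\eps^2\langle y,y\rangle_{\ell^2}$ is positive for small $\eps$; this violates one direction of the biconditional outright. Your argument instead targets positive definiteness directly by exploiting the nontrivial support $T\cap(-T)$ of the ordering (cf.\ \cref{lem:ring}): any nonzero $v$ whose square vanishes on an $\F$-large set gives $\langle v,v\rangle_{\textrm{seq}}$ simultaneously $\geq_{\ell^2}0$ and $\leq_{\ell^2}0$, forcing $\langle v,v\rangle_{\ell^2}=0$. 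This is arguably the more conceptual route, since it pinpoints exactly why no real-valued inner product can match an ordering with nontrivial support. One minor simplification: since $\F$ is nonprincipal and hence contains all cofinite sets, you do not need the even/odd partition; the single basis vector $v=e_1$ already works, as $v_n^2=0$ on the cofinite set $\{n\geq 2\}\in\F$.
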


\begin{proof}
An inner product must be linear in every component, conjugate symmetric and positive definite. 
We claim that for every inner product $\langle\cdot,\cdot\rangle_{\ell^2}$ there are $x,y \in \ell^2$ such that either 
$\langle x,y\rangle_{\ell^2} $ is positive and  $ \langle x,y\rangle_{\textrm{seq}}$ is not, or the other way around. 
Consider the following elements in $\ell^2$:
$$
x = (1,0,0,0,\ldots) , \quad y = (0,y_1, y_2, y_3, \ldots)  
$$  
 with all $y_i\in \R$. 
 The quasi-inner product between $x+\eps y$ and $x-\eps y$ for some $\eps\in \R$ yields: 
\begin{eqnarray}
\nonumber \langle x+ \eps y, x - \eps y \rangle_{\textrm{seq}} &=& (x+\eps y)\cdot(x-\eps y) \\\nonumber&=& x^2 - \eps^2 y^2 \\
&=& (1, -\eps^2 y_1^2, -\eps^2 y_2^2,-\eps^2 y_3^2, \ldots) <_{\ell^2} 0. 
\label{eq:innerprodseq}
\end{eqnarray}  
However, by linearity,  
\be \label{eq:innerprodl2}
\langle x+ \eps y, x - \eps y \rangle_{\ell^2} = \langle x,x\rangle_{\ell^2} - \eps^2 \langle y,y\rangle_{\ell^2},  
\ee
and by positive definiteness, both $\langle x,x\rangle_{\ell^2}>0$ and $\langle y,y\rangle_{\ell^2}>0$. 
For small enough  $\eps$, Equation \eqref{eq:innerprodl2} is positive whereas Equation \eqref{eq:innerprodseq} is negative.   
\end{proof} 

Since a suitable inner product fails to exist already for single elements in $\ell^2$, there will not exist a suitable matrix-inner product either. 
One can therefore not interpret terms like $\textrm{tr}(\rho A)$ as the probability of obtaining an outcome of a quantum measurement for an observable $A \in \mc{M}(\ell^2_\C)$ and a quantum state $\rho \in \mc{M}(\ell^2_\C)$.

\end{appendices}


\end{document}